\tikzstyle{vertex}=[circle, draw, inner sep=0pt, minimum size=4pt, fill = black]
\newcommand{\multiline}[1]{%
  \begin{tabularx}{\dimexpr\linewidth-\ALG@thistlm}[t]{@{}X@{}}
    #1
  \end{tabularx}
}
\def\BState{\State\hskip-\ALG@thistlm}
\titlespacing{\section}{0pt}{3ex}{2ex}
\titlespacing{\subsection}{0pt}{2ex}{1ex}
\titlespacing{\subsubsection}{0pt}{0.5ex}{0ex}
\newtheorem{theorem}{Theorem}[section]
\newtheorem{corollary}{Corollary}[section]
\newtheorem{definition}{Definition}[section]
\newtheorem{lemma}{Lemma}[section]
\newtheorem{claim}{Claim}
\newtheorem{hypothesis}{Hypothesis}
\newtheorem{observation}{Observation}[section]
\let\c@fconjecture\c@conjecture
\let\c@fconj\c@conj
\def \eps {\varepsilon}
\newcommand{\ignore}[1]{}
\def \poly { \text{\rm poly~} }
\def\tO{\tilde{O}}
\title{Tight Conditional Lower Bounds for Approximating Diameter in Directed Graphs}
\author{Mina Dalirrooyfard\footnote{Supported by NSF CAREER Award, NSF Grants CCF-1740519 and CCF-1909429. minad@mit.edu}\\MIT \and  Nicole Wein\footnote{Supported by NSF Grant CCF-1514339. nwein@mit.edu}\\MIT}
\date{}
\begin{document}

\maketitle
\thispagestyle{empty}

\begin{abstract}
Among the most fundamental graph parameters is the Diameter, the largest distance between any pair of vertices in a graph. Computing the Diameter of a graph with $m$ edges requires $m^{2-o(1)}$ time under the Strong Exponential Time Hypothesis (SETH), which can be prohibitive for very large graphs, so efficient \emph{approximation} algorithms for Diameter are desired. 

There is a folklore algorithm that gives a $2$-approximation for Diameter in $\tilde{O}(m)$ time (where $\tilde{O}$ notation suppresses logarithmic factors). Additionally, a line of work [SODA'96, STOC'13, SODA'14] concludes with a $3/2$-approximation algorithm for Diameter in weighted directed graphs that runs in $\tilde{O}(m^{3/2})$ time. For directed graphs, these are the only known approximation algorithms for Diameter.

The $3/2$-approximation algorithm is known to be tight under SETH: Roditty and Vassilevska W. [STOC'13] proved that under SETH any $3/2-\varepsilon$ approximation algorithm for Diameter in undirected unweighted graphs requires $m^{2-o(1)}$ time, and then Backurs, Roditty, Segal, Vassilevska W., and Wein [STOC'18] and the follow-up work of Li proved that under SETH any $5/3-\varepsilon$ approximation algorithm for Diameter in undirected unweighted graphs requires $m^{3/2-o(1)}$ time. 

Whether or not the folklore 2-approximation algorithm is tight, however, is unknown, and has been explicitly posed as an open problem in numerous papers. Towards this question, Bonnet recently proved that under SETH, any $7/4-\varepsilon$ approximation requires $m^{4/3-o(1)}$, only for directed weighted graphs. 

We completely resolve this question for directed graphs by proving that the folklore 2-approximation algorithm is conditionally optimal. In doing so, we obtain a series of conditional lower bounds that together with prior work, give a complete time-accuracy trade-off that is tight with all known algorithms for directed graphs. Specifically, we prove that under SETH for any $\delta>0$, a $(\frac{2k-1}{k}-\delta)$-approximation algorithm for Diameter on directed unweighted graphs requires $m^{\frac{k}{k-1}-o(1)}$ time.

\end{abstract}
\vfill
\pagebreak
\section{Introduction}

 Among the most fundamental graph parameters is the Diameter, the largest distance between any pair of vertices in a graph i.e. $\max_{u,v\in V}d(u,v)$, where $V$ is the vertex set. Efficient algorithms for computing Diameter are sought after in practice~\cite{brandes2005network,diam-prac5,diam-prac4, diam-prac2,diam-prac6,diam-prac3,ceccarello2020distributed}. From the theoretical side, algorithms for computing Diameter have been studied in a wide variety of contexts such as in the distributed~\cite{peleg2012distributed,frischknecht2012networks,holzer2014brief,grossman2020improved,kuhn2020computing}, dynamic~\cite{ancona2019algorithms,van2019dynamic}, parameterized~\cite{abboud2016approximation,bentert2019parameterized}, and quantum~\cite{le2018sublinear} settings. A number of variants of Diameter under different distance measures have also recently been proposed and studied~\cite{abboud2016approximation,dalirrooyfard2019approximation,dalirrooyfard2019tight}. We study the standard version of Diameter, which is one of the central problems in \emph{fine-grained complexity}. 
 
 The fastest known algorithms \cite{ryanapsp,PettieR05,Pettie04} for Diameter in $n$-vertex $m$-edge graphs are only slightly faster (by $n^{o(1)}$ factors) than the simple $\tilde{O}(mn)$ time\footnote{$\tilde{O}$ notation supresses polylogarithmic factors.} algorithm of running Dijkstra's algorithm from every vertex and then taking the largest distance. For dense graphs with small integer weights there are improved algorithms \cite{seidel,Zwick02,cyganbaur} using fast matrix multiplication, but these algorithms are not faster than $mn$ for sparser graphs or graphs with large weights. Furthermore, under the Strong Exponential Time Hypothesis (SETH), there is no $O(m^{2-\varepsilon})$ time algorithm for any constant $\varepsilon>0$ for computing Diameter even in unweighted, undirected graphs~\cite{RV13}. Since quadratic time can be prohibitively slow on very large graphs, finding efficient \emph{approximation} algorithms for Diameter is desirable.

A folklore $\tilde{O}(m)$ algorithm gives a 2-approximation for Diameter in directed weighted graphs. The algorithm simply picks an arbitrary vertex $v$, runs Dijkstra's algorithm from $v$ (in both directions if the graph is directed), and returns the largest distance found. By the triangle inequality, the value returned is at least half of the true diameter. The first non-trivial approximation algorithm for Diameter was by Aingworth, Chekuri, Indyk, and Motwani ~\cite{aingworth}, who presented an almost-$3/2$-approximation\footnote{An almost-$c$-approximation of $X$ is an estimate $X'$ so that $X\leq X'\leq cX+O(1)$.} algorithm for Diameter in unweighted directed graphs running in $\tilde{O}(n^2+m\sqrt n)$ time. Roditty and Vassilevska W.~\cite{RV13} then improved the running time to $\tilde{O}(m\sqrt n)$ in expectation. This was extended in~\cite{ChechikLRSTW14}
to obtain a (genuine) $3/2$-approximation algorithm for Diameter in weighted directed graphs running in $\tilde{O}(\min \{m^{3/2},mn^{2/3}\})$ time. Cairo, Grossi, and Rizzi~\cite{cairo} generalized the above results for undirected graphs with small weights and obtained a time-accuracy trade-off: for every $k\geq 1$ they obtained an $\tO(mn^{1/(k+1)})$  time algorithm that achieves an almost-$2-1/2^k$-approximation.

The above $3/2$-approximation algorithm in $\tilde{O}(m^{3/2})$ time is conditionally tight for sparse graphs in terms of both its approximation factor and its running time~\cite{RV13,stoc2018}. In particular, Roditty and Vassilevska W.~\cite{RV13} proved that under SETH, any $3/2-\varepsilon$ approximation algorithm (for $\varepsilon>0$) for Diameter in undirected unweighted graphs requires $m^{2-o(1)}$ time. In STOC'18, Backurs, Roditty, Segal, Vassilevska W., and Wein~\cite{stoc2018} proved that under SETH, any $8/5-\varepsilon$ approximation algorithm for Diameter in undirected unweighted graphs, or any $5/3-\varepsilon$ approximation for Diameter in undirected weighted graphs requires $m^{3/2-o(1)}$ time. 

Although the $3/2$-approximation algorithm is conditionally tight, the tightness of the folklore 2-approximation algorithm remains completely unclear. We focus on the following question, which was asked as Open Question 2.2 in the survey~\cite{rubinstein2019seth} by Rubinstein and Vassilevska W., and has also been explicitly asked in several other works~\cite{stoc2018,4ov}.

\begin{center}
{\bf Main Question:} \emph{Is the folklore $\tilde{O}(m)$ time 2-approximation algorithm for Diameter optimal?} 
\end{center}

Notably, for the related problem of Eccentricities, where the goal is to find the largest distance from every vertex in the graph, there is an analogous folklore algorithm and an analogous Main Question, which was resolved in~\cite{stoc2018}. They gave an $\tilde{O}(m)$ algorithm for Eccentricities that improves over the folklore algorithm, and showed that it is conditionally tight. However, their algorithm does not carry over to Diameter, and their hardness constructions only partially carry over to Diameter, so Diameter remained elusive. 

Recently, there has been some progress on the Main Question for Diameter. Li~\cite{3vs5} (in a earlier version of his paper) improved the unweighted undirected construction of~\cite{stoc2018} to match the weighted undirected construction of~\cite{stoc2018}. That is, he showed that under SETH, any $5/3-\varepsilon$ approximation algorithm for Diameter in undirected unweighted graphs requires $m^{3/2-o(1)}$ time. Then, Bonnet~\cite{4ov} surpassed this $5/3$ bound for \emph{directed weighted} graphs, by showing that under SETH, any $7/4-\varepsilon$ approximation algorithm requires $m^{4/3-o(1)}$ time. That is, before this work, there was a gap between $7/4$ and $2$ for the optimal approximation factor for an $\tilde{O}(m)$-time algorithm for Diameter in directed weighted graphs, and a gap between $5/3$ and $2$ for undirected unweighted graphs.

\subsection{Our results}

We completely resolve the Main Question in the affirmative for directed graphs. We obtain a series of conditional lower bounds that give a full time-accuracy trade-off and show that the folklore $\tilde{O}(m)$ time 2-approximation algorithm for Diameter is optimal under SETH for directed graphs. Moreover, we improve the result of Bonnet \cite{4ov} to hold for \emph{unweighted directed} graphs.

Specifically, we prove the following theorem. See Figure~\ref{fig:results} for a plot of prior work and our results.
\begin{theorem}
\label{thm:main}
Let $k\ge 4$ be a fixed integer. Assuming SETH, for all $\delta>0$, any $(\frac{2k-1}{k}-\delta)$-approximation algorithm for Diameter in an unweighted directed graph on $m$ edges requires $m^{\frac{k}{k-1}-o(1)}$ time.
\end{theorem}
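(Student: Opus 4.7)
The natural approach is to reduce from the $k$-Orthogonal Vectors problem ($k$-OV): given sets $A_1,\dots,A_k$ of $n$ binary vectors each in dimension $d=\omega(\log n)$, decide whether there exist $v_i\in A_i$ with $\bigwedge_{i=1}^k v_i=\mathbf{0}$. Under SETH, $k$-OV requires $n^{k-o(1)}$ time. The plan is to construct from such an instance an unweighted directed graph $G$ with $m=\tilde O(n^{k-1})$ edges whose diameter is at most $k$ in the NO case and at least $2k-1$ in the YES case. Since $(n^{k-1})^{k/(k-1)}=n^k$, any $\bigl(\tfrac{2k-1}{k}-\delta\bigr)$-approximation algorithm running in $O(m^{k/(k-1)-\varepsilon})$ time would then solve $k$-OV in $O(n^{k-\varepsilon'})$ time for some $\varepsilon'>0$, contradicting SETH.

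The backbone of $G$ would be a layered DAG on $k+1$ layers. Layer $0$ contains a vertex $t_\tau$ for each $(k-1)$-tuple $\tau=(v_1,\dots,v_{k-1})\in A_1\times\cdots\times A_{k-1}$; this single layer contributes $\Theta(n^{k-1})$ vertices and will dominate the edge count. Layer $k$ contains a vertex $w_v$ for each $v\in A_k$. For $1\le i\le k-2$, layer $i$ contains vertices $(v_{i+1},\dots,v_{k-1},\ell)$ indexed by a suffix of a tuple and a coordinate $\ell\in[d]$, and layer $k-1$ contains coordinate vertices $c_\ell$. The forward edges encode a coordinate-by-coordinate AND check: from $t_\tau$ one picks a coordinate $\ell$ (this accounts for the up-to-$d$ fan-out at layer $0$), then at each successive layer one ``consumes'' the next vector $v_i$ of $\tau$ and advances iff $v_i[\ell]=1$, reaching $c_\ell$ after $k-1$ hops, whence $c_\ell\to w_v$ iff $v[\ell]=1$. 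Thus a length-$k$ directed path from $t_\tau$ to $w_v$ exists iff the $k$-tuple $(v_1,\dots,v_{k-1},v)$ is non-orthogonal, and summing over layers the total edge count is $\tilde O(n^{k-1})$.

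To turn this reachability gap into an actual diameter gap I would add carefully calibrated auxiliary gadgets---a small shared ``backup'' structure together with source/sink hubs---so that (i) in the NO case every pair of vertices in $G$ lies within $k$ edges of one another (the direct $t_\tau\to w_v$ path of length $k$ always exists, and the hubs keep all other pairs close), and (ii) in the YES case the orthogonal witness pair $(t_{\tau^*},w_{v^*})$ has no direct path and the shortest detour has length at least $2k-1$. I expect the main obstacle to be exactly this balancing: adding short ``access'' edges into the backup tends to create unintended shortcuts between $t_{\tau^*}$ and $w_{v^*}$, while omitting them leaves unrelated pairs at distance greater than $k$ in the NO case, and the tolerable slack shrinks relative to $k$ as $k$ grows. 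Directedness is what should make the balance achievable---orienting the backup so it can only be traversed in one direction blocks the ``reverse'' shortcuts that would be unavoidable in an undirected construction, which is consistent with the theorem being stated only for directed graphs.
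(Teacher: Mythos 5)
Your outer framework is exactly the paper's: reduce from $k$-OV, build an unweighted directed graph with $\tilde O(n^{k-1})$ edges whose diameter is $k$ in the NO case and at least $2k-1$ in the YES case, and observe that an $O(m^{k/(k-1)-\varepsilon})$-time $(\frac{2k-1}{k}-\delta)$-approximation would then refute the $k$-OV hypothesis and hence SETH. But the proof has a genuine gap, and it is precisely the part you defer to ``carefully calibrated auxiliary gadgets'': turning the layered reachability gadget into a true Diameter instance. In the bare layered DAG the critical pair is far apart in the YES case, but so are many irrelevant pairs in the NO case (two tuple-vertices in layer $0$, for instance, have no directed path between them at all), and any generic source/sink hub that fixes this immediately creates a short detour for the critical pair, collapsing the YES-case distance well below $2k-1$. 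This balancing act is not a routine finishing step --- it is the entire technical content of the paper: the construction introduces duplicate layers $L'_i$, vertex sets $A_i,B_i$ of prefix/suffix tuples, two fixed vertices $u,v$ attached only to specific layers, and a calibrated system of directed ``back edges'' in which the length of the prefix or suffix one is allowed to change depends on which pair of levels the edge connects; the NO-case bound then needs a multi-case path analysis, and the YES-case bound needs a nontrivial argument (the ``loop''/``cover'' partition of layers) showing every $\alpha$--$\beta$ path for the orthogonal witness pair has length at least $2k-1$. None of this is supplied or even sketched in your proposal, and you explicitly acknowledge not knowing how to resolve the tension, so the theorem is not proved.

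A secondary issue: even your forward gadget is weaker than what is needed. Carrying a single coordinate $\ell$ through the layers suffices for the reachability statement, but once you add the auxiliary structure required for the NO case (which necessarily lets paths change vectors and coordinates midway), a single coordinate cannot force the YES-case distance up to $2k-1$; this is why the paper's intermediate vertices carry a full coordinate array $x=(x_1,\dots,x_{k-1})$ with the staircase consistency conditions of Table~\ref{tab:Li}, so that cheating on one vector forces a path to pay for re-certifying many coordinates. Your proposal also does not address $k=4$, which the paper handles by a separate modification of Bonnet's weighted construction, but that is minor compared with the missing core construction and its two-sided correctness proof.
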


\begin{figure}[h]
    \centering
    \includegraphics{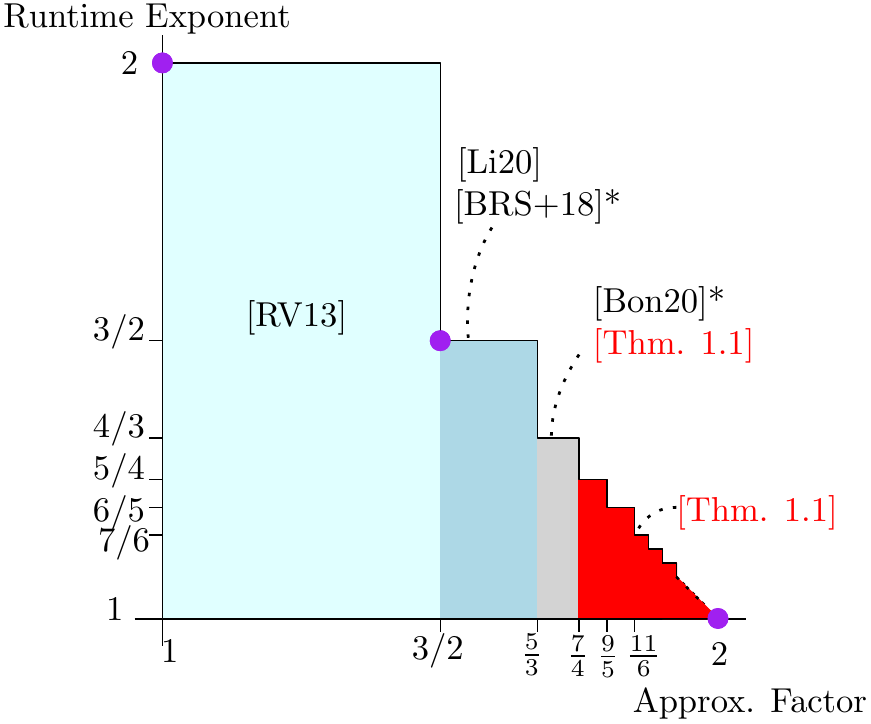}
    \caption{Runtime exponent versus approximation factor for Diameter in unweighted directed graphs. The grey and red areas are our new results which are summarized in Theorem \ref{thm:main}. The purple dots are existing algorithms. \\ *Lower bounds that were proved for weighted graphs, and later improved to hold for unweighted graphs.}
    \label{fig:results}
\end{figure}

Together with prior work, our result gives a complete time-accuracy trade-off that is tight with all known algorithms for directed graphs. In particular, combining our result with prior work, we have the following corollary, which is identical to Theorem~\ref{thm:main}, except with $k\ge 2$ instead of 4.

\begin{corollary}
Let $k\ge 2$ be a fixed integer. Assuming SETH, for all $\delta>0$, any $(\frac{2k-1}{k}-\delta)$-approximation algorithm for Diameter in an unweighted directed graph on $m$ edges requires $m^{\frac{k}{k-1}-o(1)}$ time.
\end{corollary}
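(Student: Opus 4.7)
The plan is to derive the corollary by stitching together Theorem~\ref{thm:main} with two prior conditional lower bounds, handling the three regimes $k=2$, $k=3$, and $k\ge 4$ separately. The only glue needed is the trivial observation that any SETH-based lower bound for Diameter in undirected unweighted graphs transfers to the directed unweighted setting: replacing each undirected edge by the two corresponding directed edges preserves all pairwise distances and at most doubles the number of edges, so a running-time lower bound of $m^{c-o(1)}$ in the undirected case becomes $(m/2)^{c-o(1)}=m^{c-o(1)}$ in the directed case with the same exponents.

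First, for $k\ge 4$, the corollary is exactly Theorem~\ref{thm:main}, so nothing additional is required. Next, for $k=3$, the statement specializes to: any $(5/3-\delta)$-approximation for Diameter in a directed unweighted $m$-edge graph requires $m^{3/2-o(1)}$ time. This is precisely Li's~\cite{3vs5} strengthening of the undirected unweighted bound of~\cite{stoc2018} from $8/5$ to $5/3$, lifted to directed graphs by the bidirection reduction above. Finally, for $k=2$, the statement specializes to: any $(3/2-\delta)$-approximation requires $m^{2-o(1)}$ time. This is the original theorem of Roditty and Vassilevska W.~\cite{RV13} for undirected unweighted graphs, again transferred trivially to directed graphs.

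Because every case is either an immediate restatement of a known theorem or a one-line reduction, there is no substantive obstacle in the corollary itself; all new technical content lies in Theorem~\ref{thm:main}. The only bookkeeping item worth double-checking is that the transfer from undirected to directed does not erode the $o(1)$ term, which is immediate since the edge count changes only by a constant factor. Thus the three regimes combine to give the unified trade-off stated for all $k\ge 2$.
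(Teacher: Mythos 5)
Your proposal is correct and matches the paper's intended argument: the paper obtains the corollary by simply combining Theorem~\ref{thm:main} ($k\ge 4$) with the prior bounds of Roditty--Vassilevska W.\ ($k=2$) and Li's improvement of Backurs et al.\ ($k=3$), exactly as you do, and the undirected-to-directed transfer via bidirecting edges is the standard (implicit) glue.
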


\paragraph{Concurrent work by Li \cite{3vs5}} In an independent and concurrent work, Li proved the same result as Theorem \ref{thm:main} with a different proof. 
\subsection{Our Techniques}

Backurs, Roditty, Segal, Vassilevska W., and Wein \cite{stoc2018} define a variant of Diameter called $ST$-Diameter where given a graph $G=(V,E)$, and subsets $S,T\subseteq V$ the problem asks for $\max_{s\in S,t\in T}d(s,t)$. 
 They provide the following time-accuracy trade-off lower bounds for $ST$-diameter. They prove that under SETH, for every $k\ge 2$, every algorithm that can distinguish between $ST$-diameter $k$ and $3k-2$ in undirected unweighted graphs requires $n^{\frac{k}{k-1}-o(1)}$ time. Given a $k$-OV instance, they construct a graph $G$ with $k+1$ layers, letting the first layer be $S$ and the last layer be $T$, such that the $ST$-diameter is $k$ if the $k$-OV instance is a NO instance, and the $ST$-diameter is $3k-2$ if the $k$-OV instance is a YES instance. 
 
 Each node in $S$ and $T$ represents a $k-1$ tuple $(v_1,\ldots,v_{k-1})$ of vectors from the $k$-OV instance, and the rest of the nodes represent a $k-1$ tuple $(v_1,\ldots,v_{k-2},x)$ where each $v_i$ is again a vector from the $k$-OV instance, and  $x=(x_1,\ldots,x_{k-2})$ is an array where $x_i\in [d]$. 
A path between $(v_1,\ldots,v_{k-1})\in S$ and $(u_1,\ldots,u_{k-1})\in T$ passes through all the layers, and each edge changes one of the $v_i$ vectors to a $u_j$ vector in a particular order, and ensures that these vectors have 1s at particular entries specified by the array $x$. Like all of the work on Diameter lower bounds subsequent to \cite{stoc2018}, we use this $ST$-Diameter construction as a starting point. 

The construction of \cite{stoc2018} does not directly work for Diameter since in the NO case the diameter of the graph might be as big as $2k$, as two vertices in $S$ can be far from each other. So, to get a construction for Diameter, the challenge is to add more vertices and edges to make these $S$-$S$ distances smaller in the NO case, while not decreasing the $S$-$T$ distances in the YES case.

To address this challenge for the case of $k=4$, Bonnet \cite{4ov} uses the following key idea. He copies one of the middle layers, where edges between this layer and its copy allow you to change the vectors of a node $(v_1,v_2,x)$. 
 These extra edges allow for shorter paths in the NO case exclusively.

A natural way to generalize Bonnet's construction to larger values of $k$ is to make a copy of each of the internal layers. However, it is not clear which of the vectors we should allow to change on an edge from a layer to its copy. If we allow certain vectors to change at the wrong layer, this shrinks the diameter too much in the YES case, while if we don't allow enough flexibility, this does not adequately decrease the distances in the NO case. A key insight for our construction is that  changing different sets of vectors should have different costs. 

These costs are encoded in our construction through an intricate system of ``back edges''. These back edges allow for paths that go from some layer to some previous layer while changing either some prefix or some suffix of the vector tuple. The size of the prefix or suffix that is permitted to change depends on which pair of layers these back edges are connecting. By carefully balancing the number of vectors we are permitted to change for which pairs of levels, we manage to keep all the distances at most $k$ in the NO case, and ensure that the diameter is at least $2k-1$ in the YES case.

\subsection{Organization}
In Section \ref{sec:prelim} we provide some preliminaries. 
Sections \ref{sec:k>5}, \ref{sec:k>5no} and \ref{sec:k>5yes} are devoted to Theorem \ref{thm:main} for the case where $k\ge 5$. In Section \ref{sec:k>5} we introduce the construction of our reduction, and in Sections \ref{sec:k>5no} and \ref{sec:k>5yes} we prove both directions of the reduction, respectively. Finally in Section \ref{sec:k=4} we prove Theorem~\ref{thm:main} for the case where $k=4$.

\section{Preliminaries}
\label{sec:prelim}

Let $G = (V, E)$ be a directed graph, where $n=|V|$ and $m=|E|$. For every $u,v \in V$ let $d(u,v)$ be the length of the shortest path from $u$ to $v$.

Let $k\ge 2$. The \textit{$k$-Orthogonal Vectors Problem ($k$-OV)} is as follows: Given a set $S$ of $n$ vectors in $\{0,1\}^d$, determine whether there exist $v_1,\ldots,v_k\in S$ so that their generalized inner product is 0, i.e. $\sum_{i=1}^d \prod_{j=1}^k v_j[i]=0$, where $v_j[i]$ is the $i$th bit of the vector $v_j$.

Our conditional lower
bounds are based on the \textit{$k$-OV Hypothesis}, defined as below:

\begin{hypothesis}[$k$-OV Hypothesis]\label{conj:kov} For all constants $k\ge2$: there exists $c_k>0$ such that $k$-OV on $d=c_k\log{n}$ bit vectors requires $n^{k-o(1)}$ time on a
word-RAM with $O(\log{n})$ bit words.
\end{hypothesis}
Williams \cite{TCS05} showed that if the $k$-OV Hypothesis is false, then CNF-SAT on formulas with $N$ variables and $m$ clauses can be solved in $2^{N(1-\varepsilon/k)}\poly(m)$ time. 
In particular, such an algorithm would contradict the Strong Exponential Time Hypothesis (SETH) of Impagliazzo, Paturi and Zane \cite{ipz2} which is the following: For every $\eps>0$ there is a $K$ such that $K$-SAT on $N$ variables cannot be solved in $2^{(1-\varepsilon)N}\poly(N)$ time (say, on a word-RAM with $O(\log{N})$ bit words). This means that SETH implies the $k$-OV Hypothesis.

\section{The Construction}
\label{sec:k>5}
In this section we suppose that $k\ge 5$ and we prove Theorem \ref{thm:main} by reduction from $k$-OV. We are given a $k$-OV instance $S$ where each vector in $S$ is of length $d=c_k\log{n}$, where $c_k$ is the constant defined in Hypothesis \ref{conj:kov}. 

We will create a graph $G=(V,E)$ with $O(n^{k-1}+n^{k-2}d^{k-1})$ vertices and $O(n^{k-1}d^{2k-2})$ edges, such that if the $k$-OV instance is a YES instance then $G$ has diameter $2k-1$, and if it is a NO instance then $G$ has diameter $k$. 

Before presenting the construction of $G$, we note that the above conditions on $G$ suffice to prove Theorem~\ref{thm:main}: suppose for contradiction that there exist $\delta>0$ and $\varepsilon>0$, such that there is a $(\frac{2k-1}{k}-\delta)$-approximation algorithm $\mathcal{A}$ for Diameter in directed graphs with $M$ edges that runs in $O(M^{\frac{k}{k-1}-\varepsilon})$ time. That is, $\mathcal{A}$ can distinguish whether $G$ has diameter $k$ or $2k-1$ in $O(|E|^{\frac{k}{k-1}-\varepsilon})=\tilde{O}(n^{k-(k-1)\varepsilon})$ time, where the last equality comes from the fact that $|E|=\tilde{O}(n^{k-1})$, since $k$ is constant and $d=O(\log{n})$. Then, the reduction from $k$-OV to Diameter on the graph $G$ tells us that we can solve the $k$-OV instance in $\tilde{O}(n^{k-(k-1)\varepsilon})$ time, which contradicts the $k$-OV Hypothesis.

\subsection{Vertex set}
Given our $k$-OV instance, we first augment $S$ with the all 1s vector. Note that this does not change the output of the $k$-OV instance. Then, we make $k$ copies of $S$ and call them $S_1,\dots,S_{k}$. We let a \emph{coordinate} be an element of $[d]$, that will represent a position in some vector in $S_1,\dots,S_{k}$. 

We start with $k+1$ layers of vertices $L_1,\dots,L_{k+1}$. Vertices of $L_1$ are $k-1$ tuples $(a_1,\dots,a_{k-1})$, with $a_i\in S_i$. Vertices of $L_{k+1}$ are $k-1$ tuples $(b_2,\dots,b_{k})$, where $b_i\in S_i$. For $i=2,\dots,k$, the vertices of $L_i$ are $(a_1,\dots, a_{k-i},b_{k-i+3},\dots ,b_{k},x)$, where $a_j,b_j\in S_j$ for each $j$ and $x=(x_1,\dots,x_{k-1})$ is an array of coordinates that satisfies the following two conditions: (1) For each $1\le j\le k-i$, 
$a_j[x_{\ell}]=1$ for all $1\le \ell \le k-j$, and (2) For each $k-i+3\le j\le k$,
$b_j[x_\ell]=1$ for all $k-j+1\le \ell \le k-1$. See table \ref{tab:Li}.

\begin{table}[!htb]
      \centering
         \begin{tabular}{|c|c |c|c|c|c| c|c| c| c|} 
 \hline
 & $x_1$ &$x_2$ & $\ldots$ & $x_{i-2}$&$x_{i-1}$& $x_{i}$ & $\ldots$& $x_{k-2}$& $x_{k-1}$\\ \hline
 $a_1$ & $1$ &$1$ & $\ldots$ & $1$ & $1$& $1$&$\ldots$ & $1$ & $1$ \\ \hline
 $a_2$ & $1$ &$1$ & $\ldots$ & $1$ & $1$& $1$& $\ldots$ & $1$ &  \\ \hline
 $\ldots$ & $\ldots$ &$\ldots$ & $\ldots$ & $\ldots$ & $\ldots$ & $\ldots$ & $\ldots$&&\\ \hline
 $a_{k-i}$ & $1$&$1$ & $\ldots$ & $1$&  $1$ &$1$ & & & \\ \hline
  $b_{k-i+3}$ & & &  & $1$& $1$& $1$ &$\ldots$&$1$&$1$\\ \hline
 $\ldots$ &  &  & $\ldots$ & $\ldots$ & $\ldots$ &$\ldots$&$\ldots$&$\ldots$&$\ldots$\\ \hline
 $b_{k-1}$ &  &  $1$ & $\ldots$& $1$&$1$ & $1$ &$\ldots$&$1$&$1$\\ \hline

 $b_{k}$ & $1$  &  $1$ & $\ldots$& $1$&$1$ & $1$ &$\ldots$&$1$&$1$ \\ \hline
\end{tabular}

    \caption{The relationship between the vector array $a_1,\ldots,a_{k-i},b_{k-i+3},\ldots,b_k$ and the coordinate array $x$ of a node $(a_1,\ldots,a_{k-i},b_{k-i+3},\ldots,b_k,x)$ in layer $L_i$. }
    \label{tab:Li}
\end{table}

For each $i=3,\dots,k-1$, we add a set $L'_i$ of vertices.
Vertices of $L'_i$ are $(a_1,\dots, a_{k-i},b_{k-i+3},\dots ,b_{k},x)$ where $a_j,b_j\in S_j$ and $x=(x_1,\ldots,x_{k-1})$ is any coordinate array. 

For $i=4,\ldots,k-1$, we have a set $A_i$, where each node $\alpha\in A_i$ is a $k-i$ tuple of vectors $(a_1,\ldots,a_{k-i})$, with $a_{\ell}\in S_{\ell}$. 
For $i=3,\ldots,k-2$, we have a set $B_i$, where each node $\alpha\in B_i$ is a $i-2$ tuple of vectors $(b_{k-i+3},\ldots,b_k)$, with $b_{\ell}\in S_{\ell}$.  Note that these nodes do not have a coordinate array.

Let $A=\cup_i A_i$, let $B=\cup_i B_i$, let $L=\cup_i L_i$, and let $L'=\cup_i L'_i$. For all $i$, let \emph{level $i$} denote $L_i\cup L'_i\cup A_i \cup B_i$ (or the union of these sets that exist for that $i$). In contrast, we use the word \emph{layer} to refer to an individual set $L_i$ or $L'_i$. 

Finally we have 2 additional vertices, $u$ and $v$.
This completes the definition of the vertex set of $G$. See Figure \ref{fig:fixed_edges}. Figure \ref{fig:fixed_edges} does not capture the case of $k=5$ since in this case $L_{k-2}$ comes before $L_4$, so we include Figure \ref{fig:k5} to depict the $k=5$ case. 
\paragraph{Number of nodes:} The number of nodes of layers $L_1$ and $L_{k+1}$ is $n^{k-1}$, and the number of nodes in each layer $L_i$ and $L'_i$ for $1<i<k+1$ is at most $n^{k-2}d^{k-1}$. The number of nodes in each $A_i$ and $B_i$ is at most $n^{k-2}$ and the number of fixed nodes is constant, so the graph has $O(n^{k-1}+n^{k-2}d^{k-1})$ nodes.

\subsection{Edge set}
All edges are undirected unless otherwise specified. We have five types of edges: \textit{fixed edges},
\textit{coordinate-change edges}, \textit{vector-change edges}, \textit{swap edges} and \textit{back edges}.

\begin{itemize}
\item A \emph{fixed edge} has $u$ or $v$ as one endpoint and is directed.  
\item A \emph{coordinate-change edge} is between two nodes having the same sequence of vectors and different coordinate arrays and is undirected.
\item A \emph{vector-change edge} is between two nodes with the same coordinate array and the same vector array except for at most one entry, where a vector in some $S_i$ is changed for another vector in $S_i$. A vector-change edge is undirected.
\item A \emph{swap edge} is between two nodes with the same coordinate array and the same vector array except for one entry, where a vector in $S_i$ is changed for a vector in $S_{i+2}$, or vice versa. A swap edge can also be between $L_1$ and $L_2$ or between $L_{k}$ and $L_{k+1}$, in which case a vector is changed for a coordinate array, or vice versa. A swap edge is undirected.
\item A \emph{back edge} is an edge with at least one endpoint in $A$ or $B$, and is directed. A back edge incident to one vertex in $B$ and one vertex in $A$ is called a $ba$-type back edge. Otherwise, a back edge is called an $a$-type back edge if it is incident to a vertex in $A$, and a $b$-type back edge if it is incident to a vertex in $B$.
\end{itemize}
Now we specify each of these edges in the graph.
\paragraph{Swap edges:} 
For each $i=2,\dots,k-1$, there are swap edges between $(a_1,\dots, a_{k-i},b_{k-i+3},\dots ,b_{k},x)\in L_i$ and $(a_1,\dots, a_{k-i-1},b_{k-i+2},\dots ,b_{k},x)\in L_{i+1}$. There are also swap edges between $(a_1,\dots,a_{k-1})\in L_1$ and $(a_1,\dots,a_{k-2},x)\in L_2$, as well as between $(b_2,\dots,b_{k})\in L_{k+1}$ and $(b_3,\dots,b_{k},x)\in L_k$.  

\paragraph{Vector-change edges:} For each $i=3,\dots,k-1$, there are vector-change edges between $L_i$ and $L'_i$. These edges are between $\alpha=(a_1,\dots, a_{k-i},b_{k-i+3},\dots ,b_{k},x)\in L_i$ and $\beta\in L'_i$, where $\beta$ has coordinate array equal to $x$, and the same vectors as $\alpha$, except for at most one of $a_{k-i}$ or $b_{k-i+3}$. 

\paragraph{Coordinate-change edges:} For each $i=3,\dots,k-1$, there are coordinate-change edges within each $L'_i$, and between $L'_i$ and $L_i$. 
We also have coordinate-change edges within $L_2$ and $L_{k}$. 

\paragraph{Back edges:} For $i=4,\ldots,k-1$, and for every $\alpha=(a_1,\ldots,a_{k-i},b_{k-i+3},\ldots,b_k,x)\in L_i\cup L'_i$,
we add an $a$-type back edge from $\alpha$ to $\beta=(a_1,\ldots,a_{k-i})\in A_i$. For every node $\beta=(a_1,\ldots,a_{k-i})\in A_i$, we add an $a$-type back edge from $\beta$ to any vertex $(c_1,\ldots,c_{k-4},c_{k-1},c_k,x)\in L'_4$, if $c_j=a_j$ for all $j=1,\ldots,k-i$. For $i=3,\ldots,k-2$ and every node $\alpha=(a_1,\ldots,a_{k-i},b_{k-i+3},\ldots,b_k,x)\in L_i\cup L'_i$, we add a $b$-type back edge from $\beta=(b_{k-i+3},\ldots,b_k)\in B_i$ to $\alpha$. We add a $b$-type back edge from every node $(c_1,c_2,c_5,\ldots,c_k,x)\in L'_{k-2}$ to $\beta=(b_{k-i+3},\ldots,b_k)\in B_i$ if $c_j=b_j$ for every $j=k-i+3,\ldots,k$. See Figure \ref{fig:Li}. 

Additionally, for any $i=3,\ldots,k-2$, we add a $b$-type back edge from $(a_{k-i+3},\ldots,a_k)\in B_i$ to $(a_k)\in B_{3}$. For any $i=4,\ldots,k-1$, we add an $a$-type back edge from $(a_1)\in A_{k-1}$ to $(a_1,\ldots,a_{k-i})\in A_i$. Also, for any $i=4,\dots,k-2$, we add a $ba$-type back edge from every node in $B_i$ to every node in $A_i$. Note that for $k=5$, we don't have any $ba$ type back edges since $A=A_4$ and $B=B_3$. See figure \ref{fig:k5}.

\begin{figure}[h]
  \centering
  \includegraphics[width=\linewidth]{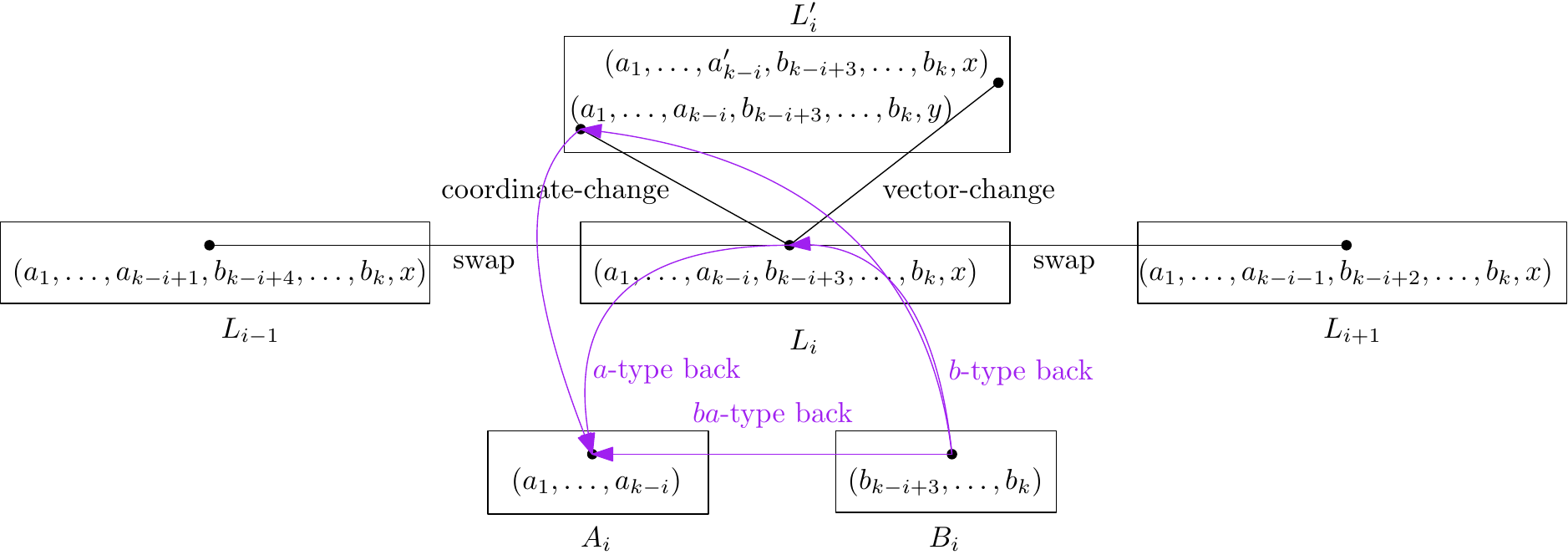}
  \caption{Edges attached to a node in $L_i$, for $i=4,\ldots,k-2$. Purple edges are back edges.}
  \label{fig:Li}
\end{figure}

\paragraph{Fixed edges:} Now we specify fixed edges. There is a directed edge from each vertex of $L'_{k-2}$ to $v$ and a directed edge from $v$ to each vertex of $L_1\cup L_2$. There is a directed edge from each vertex of $L_{k}\cup L_{k+1}$ to $u$ and a directed edge from $u$ to each vertex of $L'_{4}$. See Figure \ref{fig:fixed_edges}.

\begin{figure}[h]
  \centering
  \includegraphics[width=\linewidth]{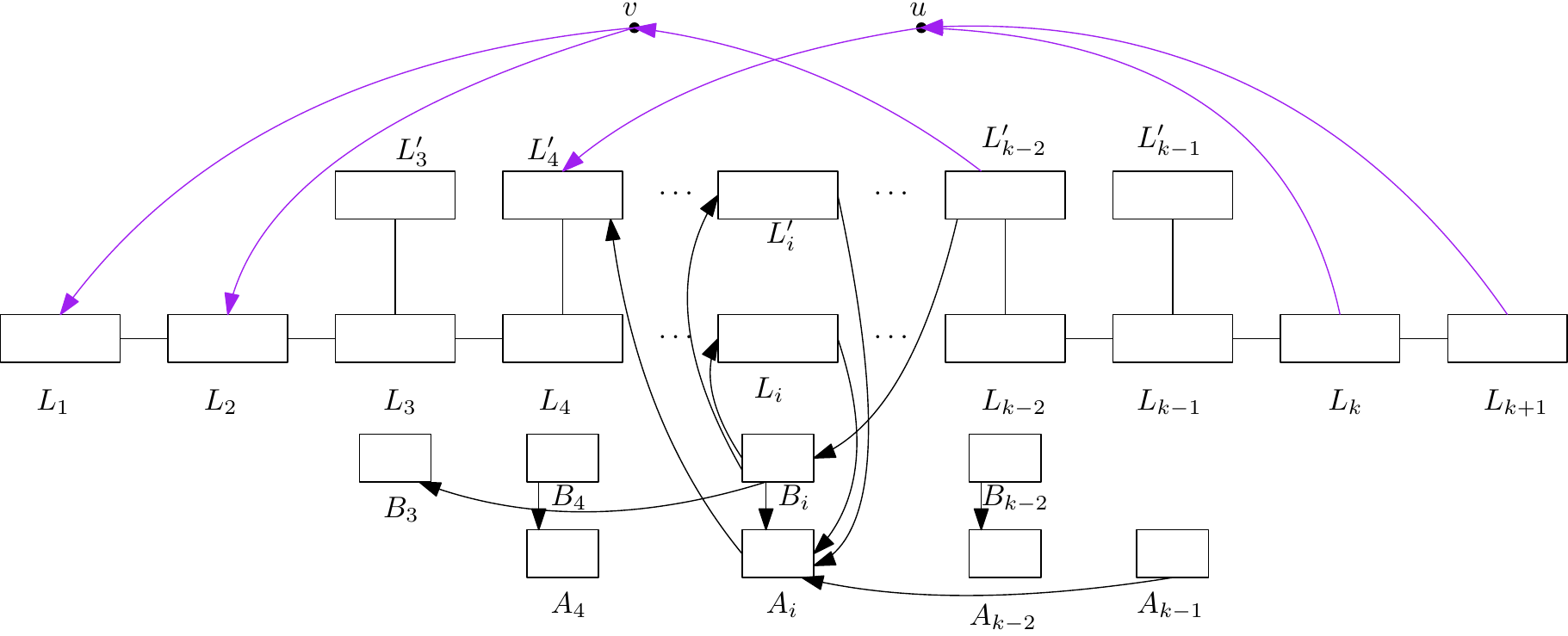}
  \caption{Vertex set of $G$ and its directed edges. The purple edges show fixed edges and they are attached to all nodes in a set they are pointing to/from.}
  \label{fig:fixed_edges}
\end{figure}

This finishes the definition of the graph $G$.

\begin{figure}
    \centering
    \includegraphics[width=0.65\linewidth]{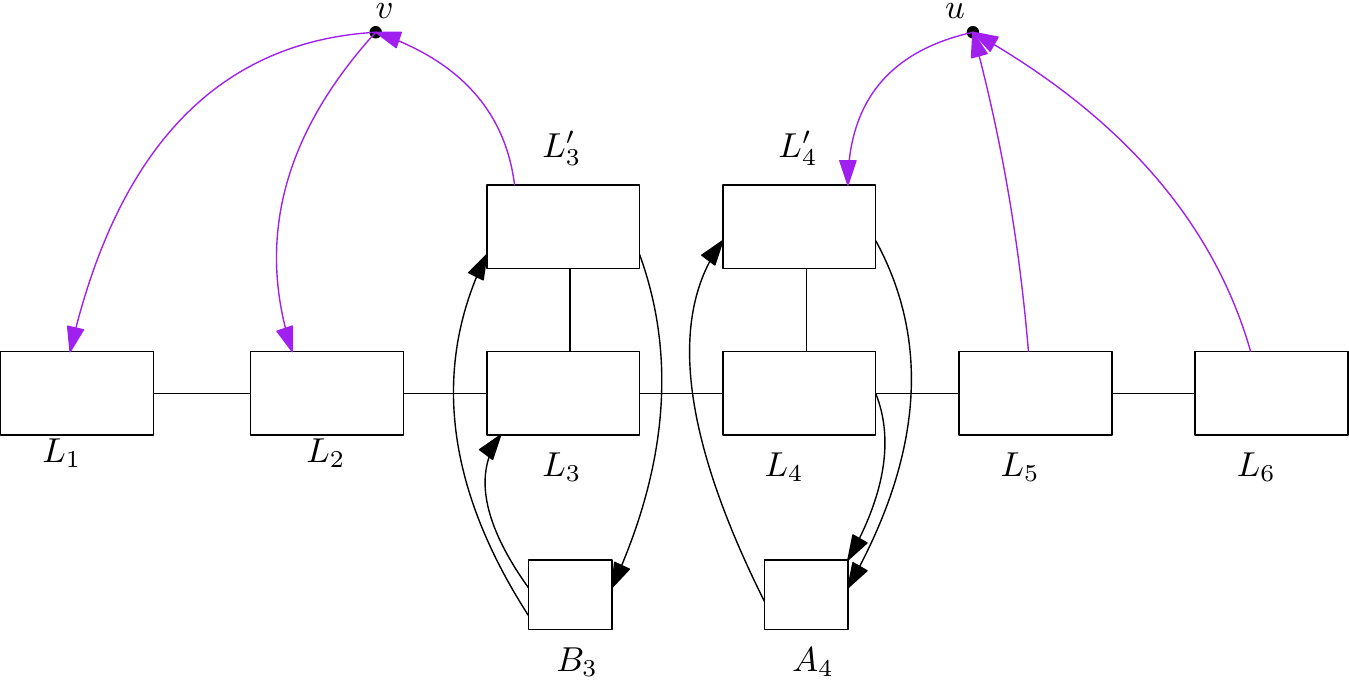}
    \caption{The construction when $k=5$.}
    \label{fig:k5}
\end{figure}

\paragraph{Number of edges:}  Coordinate-change edges and vector-change edges are only incident to vertices in $L'\cup L_2\cup \dots\cup L_k$, of which there are $O(n^{k-2}d^{k-1})$. Each such vertex has at most $d^{k-1}$ incident coordinate-change edges, since this is the total number of possible coordinate arrays. Each such vertex has at most $n$ incident vector-change edges, since each vector-change edge only changes one vector. Thus, there are $O(n^{k-1}d^{2k-2})$ coordinate-change and vector-change edges.

Swap edges are only incident to vertices in $L$. Each of the $O(n^{k-1})$ vertices in $L_1\cup L_{k+1}$ is incident to at most $d^{k-1}$ swap edges since swap edges from $L_1$ to $L_2$ and from $L_{k+1}$ to $L_k$ change a vector for a coordinate. Each of the $O(n^{k-2}d^{k-1})$ vertices in $L_2\cup\dots\cup L_k$ is incident to at most $n$ swap edges since these edges change a vector or coordinate for a vector. Thus, there are $O(n^{k-1}d^{k-1})$ swap edges.

Each vertex is incident to $O(k)$ $a$-type or $b$-type back edges since each vertex has at most one edge to and from each $A_i$ and $B_i$. The number of $ba$-type back edges between each pair $B_i,A_i$ is $O(n^{k-2})$, since $B_i$ has at most $n^{k-i-2}$ nodes and $A_i$ has at most $n^{i}$ nodes. 

Finally, each vertex is incident to at most two fixed edges.

Thus, we have shown that the total number of edges is $O(n^{k-1}d^{2k-2})$.

 \section{NO instance of $k$-OV implies diameter at most $k$}
 \label{sec:k>5no}
In a NO instance, for every set $F$ of at most $k$ vectors, there exists a coordinate that is 1 for every vector in $F$. Given a set $F$ of at most $k$ vectors, we let $C(F)$ denote a coordinate that is 1 for every vector in $F$. For any vertex $\alpha$, let $C(\alpha)$ be a coordinate that is 1 for every vector in $\alpha$.
   
 \subsection{Fixed paths}\label{sec:fixed}
 First, we will calculate the distance between pairs of vertices whose distance does not depend on the answer to the $k$-OV instance. Note that regardless of the answer to the $k$-OV instance, we can assume that every set $F$ of at most $k-1$ vectors are not orthogonal. That is, $C(F)$ and $C(\alpha)$ are well-defined for any $|F|\leq k-1$ and any $\alpha$.

\begin{claim}
\label{claim:fixedpaths}
For each vertex $\alpha\in V$:
 \begin{enumerate}
\item There exists a vertex $\beta_1\in L'_{k-2}$ so that $d(\alpha,\beta_1)\leq k-2$,
\item There exists a vertex $\beta_2\in L'_4$ so that $d(\beta_2,\alpha)\leq k-2$,
\item There exists a vertex $\beta_3\in L_{k}$ so that $d(\alpha,\beta_3)\leq k-1$, and
\item There exists a vertex $\beta_4\in L_2$ so that $d(\beta_4,\alpha)\leq k-1$.
\end{enumerate}
\end{claim}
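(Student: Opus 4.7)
My plan is to prove each of the four parts of the claim by an explicit path construction, case-splitting on the level of $\alpha$: whether $\alpha$ lies in $L_i$, $L'_i$, $A_i$, $B_i$, or is one of the auxiliary vertices $u, v$. In each case I would exhibit the required witness $\beta_j$ and verify the distance bound. Parts $1$ and $2$ are dual (both concern entering the layers $L'_{k-2}$ and $L'_4$, in opposite directions), as are parts $3$ and $4$ (both concern reaching the endpoint layers $L_k$ and $L_2$), so I would treat the two pairs in parallel with symmetric arguments.

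The primary construction walks through the natural linear ordering of the main layers $L_1, L_2, \ldots, L_{k+1}$ using swap edges to advance one layer at a time, with a single vector-change or coordinate-change edge crossing from $L_i$ into $L'_i$ (or back) whenever one endpoint of the path sits in an $L'$-layer. When $\alpha$ sits on the far side of the target layer, I would employ the fixed shortcuts $L_k \cup L_{k+1} \to u \to L'_4$ and $L'_{k-2} \to v \to L_1 \cup L_2$, which save several edges of traversal across the graph. When $\alpha \in A_i$, an outgoing $a$-type back edge lands directly in $L'_4$; when $\alpha \in B_i$, an outgoing $b$-type back edge lands in some $L_i \cup L'_i$ (possibly after a preliminary $B_i \to B_3$ hop); and for $\alpha \in \{u, v\}$, the outgoing fixed edges drop us in $L'_4$ or $L_1 \cup L_2$. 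In each such case the rest of the path reduces to the $L$-layer case already handled.

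The key technicality is ensuring that every intermediate vertex on the path is actually a vertex of $G$; for a vertex in an $L_i$ layer, this requires the coordinate array $x$ to satisfy the structured constraints of Table~\ref{tab:Li}. Since each intermediate vertex's constraints involve at most $k-1$ vectors, and the section works under the assumption that every set of at most $k-1$ vectors has a common $1$-coordinate, I can always pick each $x_\ell = C(F_\ell)$ for the appropriate subset $F_\ell$ of vectors on the path. Additionally, placing the augmented all-$1$s vector in any ``free'' slot of $\beta_j$ or of an intermediate vertex imposes no further constraint on $x$; I would use this repeatedly to keep the coordinate conditions slack.

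The main obstacle I anticipate is the boundary case where $\alpha$ lies in (or near) $L_1$, $L_{k+1}$, or an extreme $A_i$ or $B_i$. Here the natural walk consumes essentially the full $k-2$ or $k-1$ budget, so the choice of $\beta_j$ must be coordinated carefully with the walk: I would pick $\beta_j$'s free vector slots to agree with vectors already present in $\alpha$, so that no wasteful swap is needed, and fill the remaining slots with the all-$1$s vector. Once these tight cases are handled, the interior cases are more slack and should follow by the same template, perhaps with a coordinate-change detour within $L'_i$ to reset the coordinate array before proceeding.
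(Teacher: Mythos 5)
Your plan matches the paper's proof essentially step for step: prove parts 1 and 3 and obtain 2 and 4 by symmetry, reach the main layer spine in at most two edges (a back edge for $\alpha\in A\cup B$, the outgoing fixed edge for $u$ or $v$, a coordinate-change for $\alpha\in L'$), then walk via swap edges filling free slots with the all-1s vector, choosing each coordinate $x_\ell$ as a common 1-coordinate of at most $k-1$ vectors, and finish with a vector-change edge into $L'_{k-2}$ (respectively two more swaps into $L_k$). The only cosmetic difference is your optional detour through the $u$/$v$ fixed shortcuts for ``far-side'' starting vertices, which the paper avoids by simply walking backward along the undirected swap edges; both routes fit the stated budgets.
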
 

\begin{proof}
 We prove 1 and 3. Then, 2 and 4 follow due to symmetry. Starting from $\alpha$, we can proceed towards the appropriate layer ($L'_{k-2}$ or $L_{k}$) as follows. First, by taking at most two edges we can go to a vertex in $L_2\cup\dots\cup L_{k}$ as follows. If $\alpha\in A\cup B$, the first edge is a back edge to a vertex in $L'_j$ for some $j$; note that such a vertex exists because the new vectors can all be the all 1s vector, and the coordinate array can be $k-1$ copies of $C(\alpha)$. The second edge is a coordinate-change edge to $L_j$ that does not actually change the coordinate. If $\alpha$ is in an $L'_i$ set, we only take one edge which is a coordinate-change edge, where the new coordinate array is $k-1$ copies of $C(\alpha)$. If $\alpha=u$, we take a fixed edge to $L'_4$ and then an edge to $L_4$. If $\alpha=v$, we take a fixed edge to $L_2$. If $\alpha\in L_1\cup L_{k+1}$, then we take an edge to $L_2$ or $L_k$ (respectively) by adding the coordinate array that is $k-1$ copies of $C(\alpha)$. 

 Then we proceed to the $L_{k-2}$ by taking swap edges that change some vector to the all 1s vector. Then we take a vector-change edge to $L'_{k-2}$ or two swap edges to $L_{k}$. The vector-change edge need not actually change any vectors. It is straightforward to see that these paths are of the appropriate lengths.
\end{proof}

 Due to the fixed edges in the graph, the consequences of Claim~\ref{claim:fixedpaths} are respectively that
 \begin{enumerate}
     \item For each vertex $\alpha\in V$ and for any vertex $\beta\in L_1\cup L_2\cup\{v\}$, $d(\alpha,\beta)\leq k$.
     \item For each vertex $\alpha\in V$ and for any vertex $\beta\in L_{k}\cup L_{k+1}\cup\{u\}$, $d(\beta,\alpha)\leq k$.
     \item For each vertex $\alpha\in V$, $d(\alpha,u)\leq k$.
     \item For each vertex $\alpha\in V$, $d(v, \alpha)\leq k$.
 \end{enumerate}

 \subsection{Variable paths }
The distances that we did not bound in Section~\ref{sec:fixed} are those from a vertex $\alpha$ to a vertex $\beta$ in the following cases. Cases 1 through 3 demonstrate paths with both endpoints in $L\cup L'$. 
Cases 4 through 7 demonstrate paths with one endpoint in $A$ or $B$. 
 \begin{enumerate}
     \item $\alpha\in L_1\cup L_2$ and $\beta\in L'\cup L\setminus (L_1\cup L_2)$.  
     \item $\alpha\in L'\cup L\setminus (L_k\cup L_{k+1})$ 
     and $\beta\in L_{k}\cup L_{k+1}$. 
     \item $\alpha$, $\beta \in L_3\cup\dots\cup L_{k-1}\cup L'_3\cup\dots\cup L'_{k-1}$.
     \item $\alpha\in V\setminus\{u,v\}$ and $\beta\in B$.
     \item $\alpha\in B$ and $\beta\in V\setminus\{u,v\}$. 
     \item $\alpha\in V\setminus\{u,v\}$ and $\beta\in A$.
     \item $\alpha\in A$ and $\beta\in V\setminus\{u,v\}$. 
     
 \end{enumerate}

 Cases 1 and 2 are completely symmetric, as are cases 4 and 7 as well as cases 5 and 6. Hence we only analyze cases 1, 3, 4, and 5.

 \paragraph{Case 1: $\alpha\in L_1\cup L_2$ and $\beta\in L'\cup L\setminus (L_1\cup L_2)$.} If $\alpha\in L_1$ let $\alpha=(a_1,\dots,a_{k-1})$ and if $\alpha\in L_2$ let $\alpha=(a_1,\dots,a_{k-2},x_\alpha)$ in which case we let $a_{k-1}$ be the all 1s vector. We condition on which set $\beta$ is in.
 
 \subparagraph{Case 1a: $\beta\in L_k\cup L_{k+1}$.}  If $\beta\in L_{k+1}$ let $\beta=(b_2,\dots,b_{k})$, and if $\beta\in L_{k}$ let $\beta=(b_3,\dots,b_k,x_\beta)$ in which case we let $b_2$ be the all 1s vector. 

We will use the coordinate array $x=(x_1,\dots,x_{k-1})$ defined as follows: for all $1\leq \ell\leq k-1$, $x_{k-\ell}=C(a_1,\dots,a_\ell,b_{\ell+1},\dots,b_k)$. See Table \ref{tab:case1}(a).

\begin{table}[!htb]
    
    \begin{subtable}{.5\linewidth}
      \caption{Cases 1a and 4}
      \centering
         \begin{tabular}{|c|c|c|c|c|c|} 
 \hline
&$x_1$ & $\ldots$ & $x_{\ell}$ & $\ldots$ & $x_{k-1}$\\ \hline
$a_1$&$1$ & $\ldots$ & $1$ &$\ldots$ & $1$ \\ \hline
$\ldots$& $\ldots$&$\ldots$ &$\ldots$ &$\ldots$ & \\ \hline
$a_{k-\ell}$& $1$&$\ldots$  & $1$ & &  \\ \hline
$\ldots$&$\ldots$ &$\ldots$& & &  \\ \hline
$a_{k-1}$ & $1$& & & & \\ \hline
$b_2$ &  & & & &$1$ \\ \hline
$\ldots$& & & &$\ldots$ &$\ldots$ \\ \hline
$b_{k-\ell+1}$ & &&  $1$ &$\ldots$  & $1$ \\ \hline
$\ldots$&&$\ldots$ &$\ldots$ &$\ldots$ &  $\ldots$\\ \hline
$b_k$&$1$ & $\ldots$ & $1$ &$\ldots$ & $1$ \\ \hline
\end{tabular}

    \end{subtable}%
    \begin{subtable}{.5\linewidth}
      \centering
        \caption{Cases 1b and 1c}
         \begin{tabular}{|c|c|c|c|c|c|c|} 
 \hline
&$x_1$ & $\ldots$ & $x_{\ell}$ & $\ldots$ & $x_{k-2}$&$x_{k-1}$\\ \hline
$a_1$&$1$ & $\ldots$ & $1$ &$\ldots$ & $1$ &$1$\\ \hline
$a_2$&$1$ & $\ldots$ & $1$ &$\ldots$ & $1$ &$1$\\ \hline
$\ldots$& $\ldots$&$\ldots$ &$\ldots$ &$\ldots$ & & \\ \hline
$a_{k-\ell}$&$1$&$\ldots$ & $1$ &  & &  \\ \hline
$\ldots$& $\ldots$&$\ldots$ & & & &\\ \hline
$a_{k-1}$ &$1$ & & & & &\\ \hline
$c_1$ &  & & & & $1$& $1$\\ \hline
$\ldots$& & & & $\ldots$ &$\ldots$&$\ldots$\\ \hline
$c_{k-\ell-1}$ &  & & $1$&$\ldots$ & $1$ & $1$ \\ \hline
$\ldots$& &$\ldots$ &$\ldots$ &$\ldots$ & $\ldots$& $\ldots$\\ \hline
$c_{k-2}$&$1$ & $\ldots$ & $1$ &$\ldots$ & $1$& $1$ \\ \hline
\end{tabular}
    \end{subtable} 
    \caption{Coordinate arrays used in the $\alpha\beta$ path. 
    }
    \label{tab:case1}
\end{table}

We now describe a path of length $k$ from $\alpha$ to $\beta$. All of the internal vertices along the path use the coordinate $x$, and the existence of these vertices follows from the definition of $x$.

We begin by taking an edge from $\alpha$ to $(a_1,\dots,a_{k-2},x)\in L_2$ either by taking a swap edge or a coordinate-change edge (depending on whether $\alpha$ is in $L_1$ or $L_2$). We then use swap edges to go from $L_2$ to $L_{k}$, where to go from $L_{r}$ to $L_{r+1}$ we change the vector $a_{k-r}$ to the vector $b_{k-r+2}$. After traversing these edges, we end at $(b_3,\dots,b_k,x)\in L_{k}$. Finally, we take an edge to $\beta$ (which is a swap edge or a coordinate-change edge depending on whether $\beta$ is in $L_{k+1}$ or $L_{k}$). This path is shown with black (solid and dashed lines) and blue edges in Figure \ref{fig:case1}.
\subparagraph{Case 1b: $\beta\in L_{k-1}\cup L'_{k-1}$.} Let $\beta=(a'_1,b_4,\ldots,b_k,x_{\beta})=(c_1,\dots,c_{k-2},x_\beta)\in L_{k-1}\cup L'_{k-1}$. We will use the coordinate array $x=(x_1,\dots,x_{k-1})$ defined as follows: for all $1\leq \ell\leq k-2$, $x_\ell=C(a_1,\dots,a_{k-\ell},c_{k-\ell-1},\dots,c_{k-2})$, and we set $x_{k-1}=x_{k-2}$. See Table \ref{tab:case1}(b). We begin by taking an edge from $\alpha$ to $(a_1,\dots,a_{k-2},x)\in L_2$ either by taking a swap edge or a coordinate-change edge (depending on whether $\alpha$ is in $L_1$ or $L_2$). We then use swap edges to go from $L_2$ to $L_{k-1}$, where to go from $L_{r}$ to $L_{r+1}$ we change the vector $a_{k-r}$ to the vector $b_{k-r+2}$. So we are at $(a_1, b_4,\ldots,b_k,x)\in L_{k-1}$. We then take a vector-change edge to $(a'_1,b_4,\ldots,b_k,x)\in L'_{k-1}$, and then take a coordinate-change edge to $\beta$. This path is shown with black (solid and dashed lines) and red edges in Figure \ref{fig:case1}.

\subparagraph{Case 1c: $\beta \in L_3\cup\dots\cup L_{k-2}\cup L'_3\cup\dots\cup L'_{k-2}$.} Suppose $\beta\in L_i\cup L'_i$ and let $\beta=$\\ $(a'_1,\dots,a'_{k-i},b_{k-i+3},\dots,b_k,x_\beta)=(c_1,\dots,c_{k-2},x_\beta)$.

We will use the coordinate array $x=(x_1,\dots,x_{k-1})$ defined as follows: for all $1\leq \ell\leq k-2$, $x_\ell=C(a_1,\dots,a_{k-\ell},c_{k-\ell-1},\dots,c_{k-2})$, and we set $x_{k-1}=x_{k-2}$. See Table \ref{tab:case1}(b).


We begin by taking an edge from $\alpha$ to $(a_1,\dots,a_{k-2},x)\in L_2$ either by taking a swap edge or a coordinate-change edge (depending on whether $\alpha$ is in $L_1$ or $L_2$). We then use swap edges to go from $L_2$ to $L_{k-2}$, where to go from $L_{r}$ to $L_{r+1}$ we change the vector $a_{k-r}$ to the vector $b_{k-r+2}$, where $b_{k-r+2}$ has been defined as part of $\beta$ if $r\leq i-1$, and otherwise we define $b_{k-r+2}$ as the all 1s vector. After traversing these edges, we end at $(a_1,a_2,b_5,\dots,b_k,x)\in L_{k-2}$. 

Then, we take a coordinate-change edge to arrive at $(a_1,a_2,b_5,\dots,b_k,x)\in L'_{k-2}$ (without actually changing any coordinates). 
This vertex exists by option (2) of the specification of vertices in $L'_{k-2}$. So far, the path is of length $k-2$. 

Then, we use a $b$-type back edge to go to $\gamma=(b_{k-i+3},\ldots,b_k)\in B_i$, and then use another $b$-type back edge to go from $\gamma$ to $\beta$.
The full path is of length $k$. This path is shown with black (solid and dashed lines) and purple edges in Figure \ref{fig:case1}.
\begin{figure}
    \centering
    \includegraphics[width=0.8\linewidth]{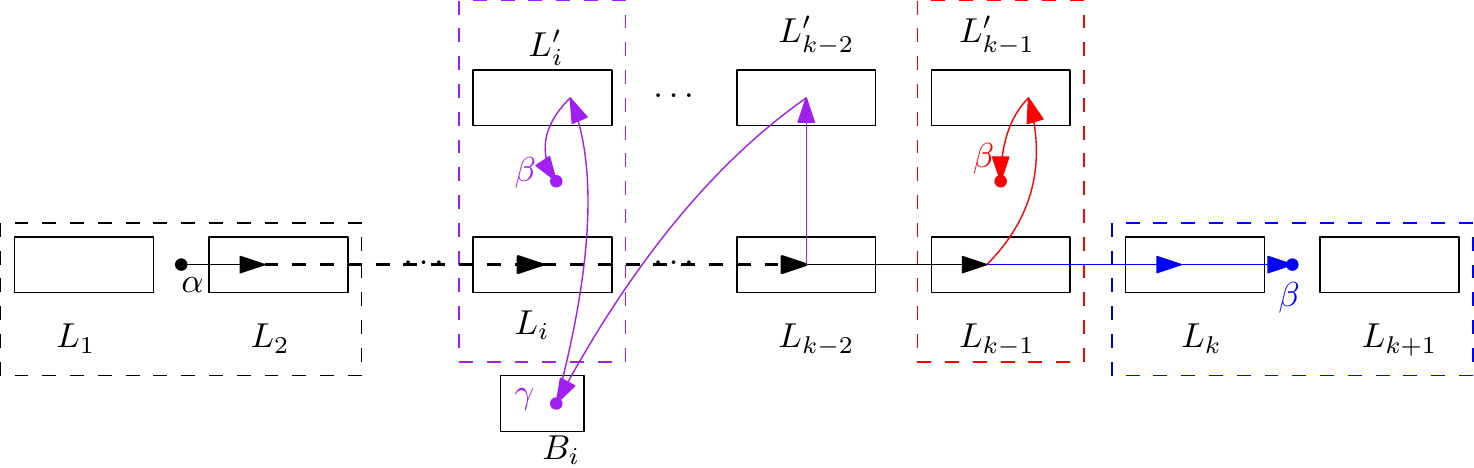}
    \caption{$\alpha\beta$ path in Case 1. Different cases of $\beta$ are shown with different colors. The path between $L_2$ and $ L_{k-2}$ is shown with black dashed lines. Solid lines indicate edges. A node in a dashed box containing two sets means that the node is in either set.}
    \label{fig:case1}
\end{figure}

\paragraph{Case 3: $\alpha$, $\beta \in L_3\cup\dots\cup L_{k-1}\cup L'_3\cup\dots\cup L'_{k-1}$.} 
Suppose $\alpha\in L_i\cup L'_i$ and $\beta\in L_j\cup L'_j$. Let $\alpha=(a_1,\dots, a_{k-i},b_{k-i+3},\dots ,b_{k},x_\alpha)=(c_1,\dots,c_{k-2},x_\alpha)$ and let $\beta=(a'_1,\dots, a'_{k-j},b'_{k-j+3},\dots ,b'_{k},x_\beta)=(c'_1,\dots,c'_{k-2},x_\beta)$.

 We will use the coordinate array $x=(x_1,\dots,x_{k-1})$ defined as follows: for all $2\leq\ell\leq k-2$, $x_\ell=C(c_1,\dots,c_{k-\ell},c'_{k-\ell-1},\dots,c'_{k-2})$, and we set $x_1=x_2$ and $x_{k-1}=x_{k-2}$. See Table \ref{tab:case3}.  
 
 \begin{table}[!htb]
      \centering
 \begin{tabular}{|c|c|c|c|c|c|c|c|} 
 \hline
&$x_1$&$x_2$ & $\ldots$ & $x_{\ell}$ & $\ldots$ & $x_{k-2}$&$x_{k-1}$\\ \hline
$c_1$&$1$ &$1$& $\ldots$ & $1$ &$\ldots$ & $1$ &$1$\\ \hline
$c_2$&$1$&$1$ & $\ldots$ & $1$ &$\ldots$ & $1$ &$1$\\ \hline
$\ldots$& $\ldots$& $\ldots$&$\ldots$ &$\ldots$ &$\ldots$ & & \\ \hline
$c_{k-\ell}$&$1$&$1$&$\ldots$ & $1$ &  & &  \\ \hline
$\ldots$& $\ldots$&$\ldots$&$\ldots$ & & & &\\ \hline
$c_{k-2}$ &$1$ &$1$ & & & & &\\ \hline
$c'_1$ & & & & & & $1$& $1$\\ \hline
$\ldots$&& & & & $\ldots$ &$\ldots$&$\ldots$\\ \hline
$c'_{k-\ell-1}$ & & & & $1$&$\ldots$ & $1$ & $1$ \\ \hline
$\ldots$& &&$\ldots$ &$\ldots$ &$\ldots$ & $\ldots$& $\ldots$\\ \hline
$c'_{k-3}$&$1$&$1$ & $\ldots$ & $1$ &$\ldots$ & $1$& $1$ \\ \hline
$c'_{k-2}$&$1$&$1$ & $\ldots$ & $1$ &$\ldots$ & $1$& $1$ \\ \hline
\end{tabular}
\caption{Case 2 coordinate array used in the $\alpha\beta$ path.}
\label{tab:case3}
\end{table}
 
 We now describe a path of length $k$ from $\alpha$ to $\beta$. Follow Figure \ref{fig:case3} for an illustration of the path. All of the internal vertices along the path use the coordinate $x$, and the existence of these vertices follows from the definition of $x$. 
 
 We first show a path of length $3$ from $\alpha$ to $\gamma=(a_1,\ldots,a_{k-i},a_{k-i+1},\ldots,a_{k-4},b'_{k-1},b'_k,x)\in L_4$, where $a_{\ell}$ is the all 1 vector for all $\ell>k-i$ and if $j=3$, $b'_{k-1}$ is the all 1 vector (otherwise $b'_{k-1}$ is defined for $\beta$). 
 
 If $i>3$, using an $a$-type back edge we go from $\alpha$ to $(a_1,\ldots,a_{k-i})\in A_i$, and using another $a$-type back edge we go to $(a_1,\ldots,a_{k-i},{a}_{k-i+1},\ldots,{a}_{k-4},b'_{k-1},b'_k,x)\in L'_4$. Now using a coordinate-change edge we go to $\gamma\in L_4$, without actually changing any coordinates. This corresponds to the green path in Figure \ref{fig:case3}.
 
 If $i=3$, we first take a coordinate-change edge to $(a_1,\ldots,a_{k-3},b_k,x)\in L'_3$, then take a vector-change edge to change $b_k$ to $b'_k$ and arrive at $(a_1,\ldots,a_{k-3},b'_k,x)\in L_3$, and then use a swap edge to go to $\gamma\in L_4$. This corresponds to the purple path in Figure \ref{fig:case3}.

Next, we show a path of length $3$ from $\gamma' = (a_1,a_2,b'_5,\ldots,b'_k,x)\in L_{k-4}$ to $\beta$, where $b'_{\ell}$ is the all 1s vector for $\ell<k-j+3$ and $a_2$ is the all 1s vector if $j=k-1$. 
 Then we show that we can go from $\gamma$ to $\gamma'$ in $k-6$ when $k\geq 6$. We handle $k=5$ in the last paragraph of Case 2. For now, we assume that $k>5$.
 
 Now, we use swap edges to go from $\gamma\in L_4$ to $L_{k-2}$ where for all $4\le r<k-2$ to go from $L_{r}$ to $L_{r+1}$ we change the vector $a_{k-r}$ to the vector $b'_{k-r+2}$, where $b'_{k-r+2}$ has already been defined as part of $\beta$ if $r\leq j-1$, and otherwise we define $b'_{k-r+2}$ as the all 1s vector. This path of swap edges is of length $k-6$, so the path is thus far of length $k-3$. After traversing these edges, we end at $\gamma'=(a_1,a_2,b'_5,\ldots,b'_k,x)\in L_{k-2}$. The $\gamma\gamma'$ path is shown with black dashed lines in Figure \ref{fig:case3}. If $j<k-1$, then using a coordinate-change edge we go to  $(a_1,a_2,b'_5,\ldots,b'_k,x)\in L'_{k-2}$ without actually changing any coordinates. Then we take a $b$-type back edge to go to $(b'_{k-j+3},\ldots,b'_k)\in B_j$, and using another $b$-type back edge we go to $\beta$. This corresponds to the orange path in Figure \ref{fig:case3}.

If $j=k-1$, we use a swap edge to go to $(a_1,b'_4,\ldots,b'_k,x)\in L_{k-1}$, then use a vector-change edge to change $a_1$ to $a'_1$ and arrive at $L'_{k-1}$, and then use a coordinate-change edge to $\beta$. This corresponds to the red path in Figure \ref{fig:case3}. The total length of the path is hence $k$. 

This concludes Case 2 when $k>5$. Now suppose that $k=5$. We have already shown that there is a path of length 3 from $\alpha$ to $\gamma=(a_1,b'_4,b'_5,x)\in L_4$. If $j=4$, we get to $\beta$ from $\gamma$ using a vector-change edge to $(a'_1,b'_4,b'_5,x)\in L'_4$, and then coordinate-change edge. So suppose that $j=3$. Then there is a path of length $3$ from $\gamma''=(a_1,a_2,b_5,x)\in L_3$ to $\beta$ as follows: take a vector-change edge to $(a_1,a_2,b'_5,x)\in L'_3$, then a back edge to $(b'_5)\in B_3$, and then another back edge to $\beta$. Now to go from $\alpha$ to $\gamma''$ in at most $2$, we do the following: If $i=3$, we take a coordinate-change edge from $\alpha$ to $\gamma''$. If $i=4$, we first take a coordinate-change edge to $(a_1,b_4,b_5,x)\in L_4$, and then a swap edge to $\gamma''$.
\begin{figure}[h]
  \centering
  \includegraphics[width=0.8\linewidth]{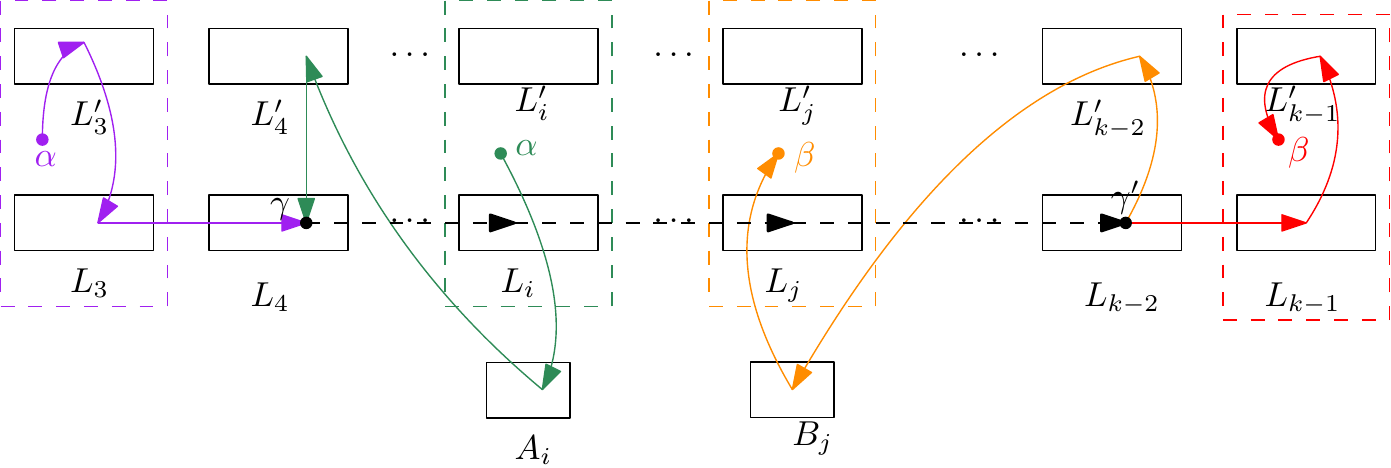}
  \caption{$\alpha\beta$ path in Case 3. Different cases of $\alpha$ and $\beta$ are shown with different colors. The path between $\gamma\in L_4$ and $\gamma'\in L_{k-2}$ is shown with black dashed lines. Solid lines indicate edges. A node in a dashed box containing two sets means that the node is in either set.}
  \label{fig:case3}
\end{figure}

\paragraph{Case 4: $\alpha\in V\setminus\{u,v\}$ and $\beta\in B$:} 
For the majority of this case, we assume that $k>5$, and then at the end of this case we include a paragraph to handle $k=5$.
Let $\beta\in B_j$ and $\beta=(b_{k-j+3},\ldots,b_k)$.

 An intermediate node on the path from $\alpha$ to $\beta$ will be some $\gamma\in L_4$ of the form $(a_1,\ldots,a_{k-4},b_{k-1},b_k,x)$, where $b_{k}$ has already been defined, $b_{k-1}$ has been defined unless $j=3$ in which case $b_{k-1}$ is the all 1s vector, each $a_1,\ldots,a_{k-4}$ is some vector from the appropriate set $S_1,\ldots, S_{k-4}$, and $x$ is a coordinate array satisfying the following conditions: For $\ell=4,\dots,k-4$, $x_{\ell}=C(a_1,\ldots,a_{k-\ell},b_{k-\ell+1},\ldots,b_k)$, where for $r<k-j+3$, $b_r$ is the all 1s vector, and for $r>k-4$, $a_r$ is the all 1s vector. 
 See Table~\ref{tab:case1}. Each vertex in the path from $\alpha$ to $\beta$ that we will specify exists due to the definition of $x$.

First, we show that for \emph{all} $\gamma$ of the above form, there is a path of length at most $k-3$ from $\gamma$ to $\beta$. Using swap edges we go from $\gamma$ to $\gamma'=(a_1,a_2,b_{5},\ldots,b_k,x)\in L_{k-2}$, where $b_{s}$ has already been defined for for $s\geq k-j+3$, and $b_s$ is the all 1s vector for $s< k-j+3$. To construct this path from $\gamma$ to $\gamma'$, for all $r=4,\dots,k-2$, to go from $L_{r}$ to $L_{r+1}$ we change the vector $a_{k-r}$ to the vector $b_{k-r+2}$. Then from $\gamma'$ we take an edge to $(a_1,a_2,b_5,\ldots,b_k,x)\in L'_{k-2}$  and then take a $b$-type back edge to $\beta$. The $\gamma\beta$ path is specified with dashed black lines representing subpaths and  black edges in Figure \ref{fig:case4}.

To complete the path from $\alpha$ to $\beta$, we will show a path of length at most $3$ from $\alpha$ to \emph{some} $\gamma$ of the above form. We divide into cases based on where $\alpha$ is. Because $a_1,\dots,a_{k-4}$ are unspecified in the definition of $\gamma$, we have the freedom to specify these vectors in the following cases.

\subparagraph{Case 4a: $\alpha \in L_k\cup L_{k+1}$.} From $\alpha$ we take a fixed edge to $u$ and then from $u$ we take a fixed edge to $(a_1,\ldots,a_{k-4},b_{k-1},b_{k},x)\in L'_4$, where we define $a_i$ for $i=1,\ldots,k-4$ as the all 1s vector.
Using a coordinate-change edge we go to $\gamma=(a_1,\ldots,a_{k-4},b_{k-1},b_{k},x)\in L_4$ without actually changing any coordinates. This path is illustrated in purple in Figure \ref{fig:case4}.

\subparagraph{Case 4b: $\alpha \in L_1\cup L_2$.} Let $\alpha=(a_1,\ldots,a_{k-1})$ if $\alpha\in L_1$ and let $\alpha = (a_1,\ldots,a_{k-2},x_{\alpha})$ if $\alpha \in L_2$ in which case $a_{k-1}$ is the all 1s vector. 
We begin by taking an edge from $\alpha$ to $(a_1,\dots,a_{k-2},x)\in L_2$ either by taking a swap edge or a coordinate-change edge (depending on whether $\alpha$ is in $L_1$ or $L_2$). We then use two swap edges to go from $L_2$ to $L_{4}$, where to go from $L_{r}$ to $L_{r+1}$ we change the vector $a_{k-r}$ to the vector $b_{k-r+2}$. So we arrive at $\gamma=(a_1,\ldots,a_{k-4},b_{k-1},b_{k},x)$. This path is illustrated in green in Figure \ref{fig:case4}.

\subparagraph{Case 4c: $\alpha \in L_3\cup L'_3\cup B_3$.}  Let $\alpha = (a_1,\ldots,a_{k-3},b'_k,x_{\alpha})=(c_1,\ldots,c_{k-2},x_{\alpha})$ if $\alpha\in L_3\cup L'_3$, and let $\alpha = (b'_{k})=(c_{k-2})$ if $\alpha\in B_3$, in which case $a_{\ell}$ is the all 1s vector for $\ell=1,\ldots,k-3$. 
From $\alpha$, we take a back edge or a coordinate-change edge to $(a_1,\ldots,a_{k-3},b'_k,x)\in L'_3$. Then take a vector-change edge to change $b'_k$ to $b_k$ and arrive at $(a_1,\ldots,a_{k-3},b_k,x)\in L_3$. Then using a swap edge we proceed to $\gamma=(a_1,\ldots,a_{k-4},b_{k-1},b_k,x)\in L_4$. This path is illustrated in red in Figure \ref{fig:case4}. 

\subparagraph{Case 4d: $\alpha \in L_i\cup L'_i\cup B_i\cup A_i$ for $i=4,\ldots,k-1$.} First if $\alpha\notin A_i$, we show how to get to a node in $A_i$. Then we show how to go to $\gamma$ from any node in $A_i$ using a path of length $2$.
Suppose that $\alpha=(a_1,\ldots,a_{k-i},b'_{k-i+3},\ldots,b'_k,x_{\alpha})$ if $\alpha\in L_i\cup L'_i$, and suppose that $\alpha=(b'_{k-i+3},\ldots,b'_k)$ if $\alpha\in B_i$, in which case we assume that $a_{\ell}$ is the all 1s vector for $\ell=1,\ldots,k-i$. Take an $a$-type or $ba$-type back edge to $(a_1,\ldots,a_{k-i})\in A_i$. Now we show how to proceed from this vertex to $\gamma$. 

We take an $a$-type back edge to $(a_1,\ldots,a_{k-4},b_{k-1},b_k,x)\in L'_4$, and then take a coordinate-change edge to $\gamma=(a_1,\ldots,a_{k-4},b_{k-1},b_k,x)\in L_4$ without actually changing any coordinates. This path is illustrated in orange in Figure \ref{fig:case4}. 

\begin{figure}[h]
  \centering
  \includegraphics[width=\linewidth]{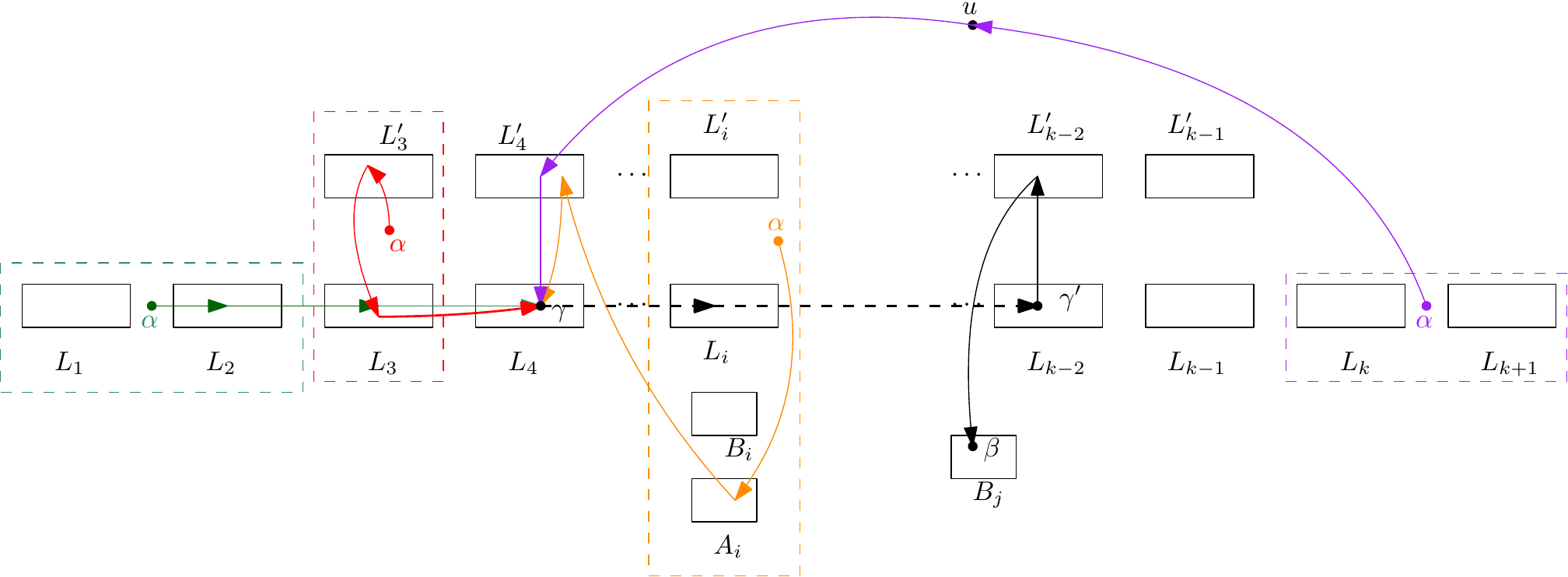}
  \caption{$\alpha\beta$ path in Case 4. Different cases of $\alpha$ are shown with different colors. The path between $\gamma\in L_4$ and $\gamma'\in L_{k-2}$ is shown with black dashed lines. Solid lines indicate edges. A node in a dashed box containing two sets means that the node is in either set.
 }
  \label{fig:case4}
\end{figure}
This completes Case 4 when $k>5$. Now we let $k=5$ and specify a path of length at most 5 from $\alpha$ to $\beta$. Since $k=5$, we have that $B=B_3$ so let $\beta=(b_k)\in B_3$. Regardless of its location in the graph, $\alpha$ has at most 4 vectors in its representation. Let $a_1,a_2,a_3,a_4$ be a set consisting of all of the vectors in the representation of $\alpha$ plus some all 1s vectors if $\alpha$ has fewer than 4 vectors in its representation. Let $x$ be the coordinate array consisting of 4 copies of the coordinate $C(a_1,a_2,a_3,a_4,b_{k})$. From $\alpha$ we can go to $\gamma=(a_1,a_2,b'_5,x)\in L_3$ using a path of length $3$. To do this, we either take a coordinate-change edge or a back edge to a vertex in $L_2\cup \ldots \cup L_4\cup L'$, and then take swap or coordinate-change edges to $\gamma$. Then from $\gamma$, we take a vector-change edge to $(a_1,a_2,b_5,x)\in L'_3$, and then a back edge to $\beta$.

\paragraph{Case 5:} $\alpha\in B$ and $\beta\in V\setminus \{u,v\}$. Suppose that $\alpha \in B_i$. First suppose that $i>3$. We know that by Claim \ref{claim:fixedpaths}, there is a $\beta'\in L'_4$ such that $d(\beta',\beta)\le k-2$. We show that there is a path of length $2$ from $\alpha$ to $\beta'$. Suppose that $\beta'=(a_1,\ldots,a_{k-4},b'_{k-1},b'_k,x_{\beta'})$. From $\alpha$, take a $ba$-type back edge to $(a_1,\ldots,a_{k-i})\in A_i$, and then take an $a$-type back edge to $\beta'$. 

Now suppose that $i=3$ and $\alpha = (b_k)$. Suppose that $\beta$ is in the $j$th level. We represent $\beta$ as follows: If $\beta\in L_j\cup L'_j$ for $j=2,\ldots,k$, let $\beta=(a_1,\ldots,a_{k-j},b'_{k-j+3},\ldots,b'_k,x_{\beta})$. If $\beta\in L_1$, let $\beta=(a_1,\ldots,a_{k-1})$, and if $\beta\in L_{k+1}$ let $\beta=(b'_2,\ldots,b'_k)$. Finally, if $\beta\in A_j$ for some $j=4,\ldots,k-1$, let $\beta=(a_1,\ldots,a_{k-j})$ and if $\beta\in B_j$ for some $j=3,\ldots,k-2$, let $\beta=(b_{k-j+3},\ldots,b_k)$. Define the coordinate array $x$ such that for all $\ell=1,\ldots,k-1$, $x_{\ell}=C(a_1,\ldots,a_{k-j},b'_{k-j+3},\ldots,b'_k,b_k)$, where for $j=1$, $x_{\ell}=C(a_1,\ldots,a_{k-1},b_k)$ and for $j=k+1$, $x_{\ell}=C(b'_2,\ldots,b'_k,b_k)$. We show a path of length at most $k$ to $\beta$: First take a $b$-type back edge to $(a_1,\ldots,a_{k-3},b_k,x)\in L'_3$, where $a_{\ell}$ is the all 1s vector for $\ell>k-j$. Then take a vector-change edge to change $b_k$ to $b'_k$ and arrive at $(a_1,\ldots,a_{k-3},b'_k,x)\in L_3$. 

If $\beta\in L_j$ for $j=1,2,k,k+1$, take swap edges to $\beta$. This path is of length $2+|j-3|\le k$, and it is shown in Figure \ref{fig:case5}: For $j=1,2$ it is shown with black and green and for $j=k,k+1$ it is shown with black (solid and dashed) and blue.

If $\beta\in L_j\cup L'_j$ for some $j=3,\dots,k-1$, take swap edges to get to 
$(a_1,\ldots,a_{k-j},b'_{k-j+3},\ldots,b_{k-j},x)\in L_j$. So far the path is of length $2+|j-3|\le k-2$. If $\beta\in L'_j$, take one coordinate-change edge to $\beta$. Note that the $\alpha\beta$ path in this case is of length $k-1$. If $\beta\in L_j$, take one coordinate-change edge to the copy of $\beta$ in $L'_j$, and then another edge to $\beta
\in L_j$. These paths are shown with black (dashed and solid) and purple in Figure \ref{fig:case5}.

If $\beta\in A\cup B$, then there is a $\beta'\in L'_j$ for some $j$, such that $d(\beta',\beta)=1$. Above, we showed how to get to any $\beta'\in L'_j$ with a path of length at most $k-1$. For $\beta\in A_j$ the path is shown with black and orange in Figure \ref{fig:case5}, and for $\beta\in B_j$ it is shown with black and red.

\begin{figure}[h]
  \centering
  \includegraphics[width=\linewidth]{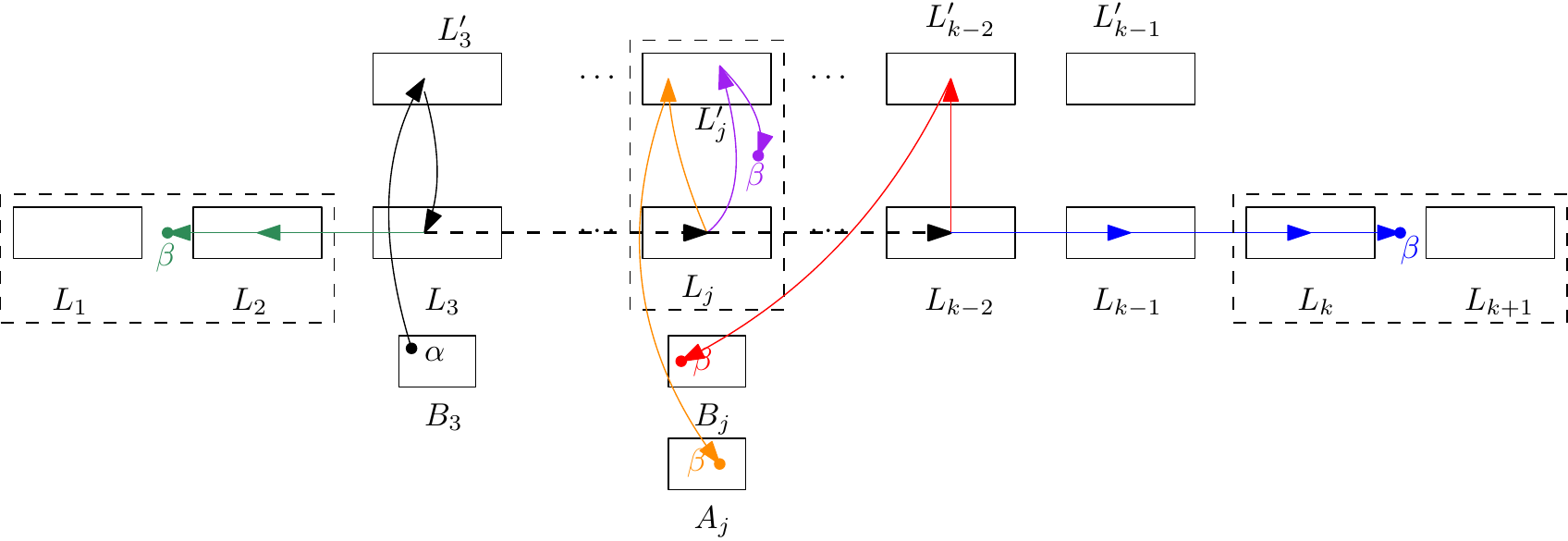}
  \caption{$\alpha\beta$ paths in Case 5 when $\alpha\in B_3$.
Different cases of $\beta$ are shown with different colors. The path between $L_3$ and $L_{k-2}$ is shown with black dashed lines. Solid lines indicate edges. A node in a dashed box containing two sets means that the node is in either sets.}
  \label{fig:case5}
\end{figure}

\section{YES instance of $k$-OV implies diameter at least $2k-1$}
\label{sec:k>5yes}

Let $a_1,a_2,\dots, a_{k}\in S$ be orthogonal. Let $\alpha=(a_1,\dots,a_{k-1})\in L_1$ and let $\beta=(a_2,\dots,a_{k})\in L_{k+1}$. We claim that $d(\alpha,\beta)\geq 2k-1$. Let $P$ be a shortest path from $\alpha$ to $\beta$.

Recall that \emph{level $i$} is defined as $L_i\cup L'_i\cup A_i\cup B_i$. Recall that a \emph{layer} refers to an individual set $L_i$ or $L'_i$. 

We begin with two observations.

\begin{observation}\label{obs:forward}
The only edges that go from a vertex in some level $i$ to a vertex in a level $j>i$ are swap edges between $L_i$ and $L_{i+1}$.
\end{observation}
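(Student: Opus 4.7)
The plan is to enumerate the five edge types defined in Section~\ref{sec:k>5}---fixed, coordinate-change, vector-change, swap, and back---and check for each whether it can join a vertex in some level $i$ to a vertex in a strictly larger level. The observation will follow because swap edges turn out to be the only such ``forward'' edges, and their definition places them between $L_i$ and $L_{i+1}$.

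Three of the five types are immediate. Fixed edges always have $u$ or $v$ as an endpoint, and neither of these two vertices is assigned to any level, so they are irrelevant to the statement. Coordinate-change edges join vertices with identical vector tuples, so both endpoints lie in level $i$ (either inside $L_i$, inside $L'_i$, or across the pair $L_i,L'_i$). Vector-change edges are defined only between $L_i$ and $L'_i$, hence also keep the level fixed. Swap edges, by their very definition, go between $L_i$ and $L_{i+1}$ (including the boundary cases $L_1\leftrightarrow L_2$ and $L_k\leftrightarrow L_{k+1}$), which is exactly the kind of forward edge allowed by the observation.

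The one item that needs careful case-checking---and the only mild obstacle---is the back edges, because there are several subtypes and it would be easy to overlook one. I would simply read them off the construction: the edges $L_i\cup L'_i\to A_i$, $B_i\to L_i\cup L'_i$, and the $ba$-type edges $B_i\to A_i$ all stay at level $i$; the edges $A_i\to L'_4$ go from level $i\ge 4$ down to level $4$; the edges $L'_{k-2}\to B_i$ with $i\le k-2$ go from level $k-2$ down to level $i$; the edges $B_i\to B_3$ go from level $i\ge 3$ down to level $3$; and the edges $A_{k-1}\to A_i$ with $i\le k-1$ go from level $k-1$ down to level $i$. In every case the head's level is at most the tail's, so no back edge is forward, and combining with the earlier edge types yields the observation.
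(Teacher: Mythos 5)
Your case analysis is correct and complete: it is exactly the inspection of the construction that the paper leaves implicit (the observation is stated without proof), with all back-edge subtypes---including $A_i\to L'_4$, $L'_{k-2}\to B_i$, $B_i\to B_3$, and $A_{k-1}\to A_i$---correctly verified to go to a level no higher than their tail's, and $u,v$ correctly excluded since they belong to no level. Nothing is missing.
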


The next observation follows directly from Observation~\ref{obs:forward}.

\begin{observation}\label{obs:layer}
For all $i<j$, any path from a vertex in level $i$ to a vertex in level $j$ uses a vertex in every layer $L_i,\dots,L_j$. 
\end{observation}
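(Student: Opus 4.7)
The plan is to use Observation~\ref{obs:forward} directly. That observation tells us that any edge which advances the level index (going from some level $\ell$ to a strictly higher level) must be a swap edge between $L_\ell$ and $L_{\ell+1}$. In particular, such an edge advances the level by exactly one, and both endpoints lie in the main $L$-layers (not in $L'$, $A$, or $B$ at those levels). Every other edge in the graph either stays at the same level or moves to a strictly lower level, so backward and ``same-level'' moves do not help us skip past any layer.

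Given a shortest (or any) path $P$ from a vertex in level $i$ to a vertex in level $j$ with $i < j$, I would fix an arbitrary $\ell \in \{i, i+1, \ldots, j-1\}$ and look at the cut between levels $\leq \ell$ and levels $\geq \ell+1$. Since $P$ starts at level $i \leq \ell$ and ends at level $j \geq \ell+1$, the net number of forward crossings of this cut along $P$ is at least one, so $P$ must traverse at least one edge that advances from level $\leq \ell$ to level $\geq \ell+1$. By Observation~\ref{obs:forward} this forward edge is a swap edge from $L_\ell$ to $L_{\ell+1}$, so $P$ contains a vertex of $L_\ell$ (and of $L_{\ell+1}$). Letting $\ell$ range over $\{i, i+1, \ldots, j-1\}$ yields vertices of $P$ in each of $L_i, L_{i+1}, \ldots, L_j$, which is exactly the claim.

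I do not anticipate a real obstacle: the argument is essentially a boundary-counting calculation enabled by the strong structural restriction of Observation~\ref{obs:forward}, which forces every forward step to be a one-level swap between consecutive $L$-layers. The only subtlety worth spelling out is that the starting vertex may lie in $L'_i$, $A_i$, or $B_i$ rather than in $L_i$ itself; however this is handled uniformly by the $\ell = i$ case of the above argument, since the first forward edge the path uses to leave level $i$ must still originate in $L_i$.
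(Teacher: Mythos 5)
Your proposal is correct and matches the paper's treatment: the paper gives no separate argument, stating only that the observation ``follows directly from Observation~\ref{obs:forward}'', and your cut-crossing argument is precisely the routine way to fill that in (each forward crossing of the cut between levels $\le \ell$ and $\ge \ell+1$ must be a swap edge from $L_\ell$ to $L_{\ell+1}$). The only point worth making explicit is that $u$ and $v$ belong to no level, but their incident fixed edges (into $u$ only from $L_k\cup L_{k+1}$, out of $u$ only to $L'_4$; into $v$ only from $L'_{k-2}$, out of $v$ only to $L_1\cup L_2$) can never produce a forward crossing, so your accounting goes through unchanged.
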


We claim that if $P$ contains either $u$ or $v$ (or both) then the length of $P$ is at least $2k-1$. If $P$ contains $v$ then $P$ must go from $L_1$ to $L_{k-2}$ to $L'_{k-2}$ to $v$ to $L_2$ to $L_{k+1}$, which costs at least $2k-1$ by Observation~\ref{obs:layer}. The argument is symmetric for $u$: If $P$ contains $u$ then $P$ must go from $L_1$ to $L_k$ to $u$ to $L'_4$ to $L_4$ to $L_{k+1}$, which costs at least $2k-1$ by Observation~\ref{obs:layer}. From now on we assume that $P$ does not contain $u$ or $v$.

Next, we claim that if $P$ contains a $ba$-type back edge then the length of $P$ is at least $2k-1$. Since the only edges to $B$ are from $L'_{k-2}$, and the only edges from $A$ are to $L'_4$, if $P$ contains a $ba$-type back edge then $P$ must go from $L_1$ to to $L_{k-2}$ to $L'_{k-2}$ to $B$ to $A$ to $L'_4$ to $L_4$ to $L_{k+1}$. This costs at least $2k-1$ by Observation~\ref{obs:layer}. From now on we assume that $P$ does not contain any $ba$-type back edges. 

We make one more observation, which follows from Observation~\ref{obs:layer} and the following fact: Ignoring $ba$-type edges, all edges from $A$ go to $L'_4$, all edges to $A$ are from a level that is at least 4, all edges to $B$ are from $L'_{k-2}$, and all edges from $B$ are to a level that is at most $k-2$.

\begin{observation}\label{obs:ab}
If $P$ visits $A$, then $P$ visits $L_{4}$ both before and after visiting $A$. If $P$ visits $B$, then $P$ visits $L_{k-2}$ both before and after visiting $B$. 
\end{observation}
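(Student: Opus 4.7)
The plan is to prove both halves of Observation \ref{obs:ab} by carefully enumerating the non-$ba$-type edges that cross the boundaries of $A$ and $B$, and then invoking Observation \ref{obs:layer}. I will handle the $A$-half in detail and note that the $B$-half is symmetric.

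First I would check which non-$ba$-type edges enter $A$ from outside $A$ and which leave $A$ to outside $A$. Reading off the construction, the relevant $a$-type back edges are: from $L_i \cup L'_i$ into $A_i$ for $i = 4,\ldots,k-1$ (entering $A$), from $A_i$ to $L'_4$ for $i = 4,\ldots,k-1$ (leaving $A$), and from $A_{k-1}$ to the various $A_i$ (internal to $A$, hence not crossing the boundary). Consequently, every edge entering $A$ from outside originates at a vertex in some level $j \geq 4$, and every edge leaving $A$ to outside lands in $L'_4$, which belongs to level $4$.

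Next I would apply this to $P$. Let $v_s$ be the first vertex of $P$ that lies in $A$ and $v_t$ be the last. Since $\alpha \in L_1$ and $\beta \in L_{k+1}$ are both outside $A$, the neighbors $v_{s-1}$ and $v_{t+1}$ exist and lie outside $A$, so by the enumeration above $v_{s-1}$ lies in level $\geq 4$ and $v_{t+1}$ lies in $L'_4$, hence in level $4$. Observation \ref{obs:layer} applied to the prefix of $P$ from $\alpha$ (level $1$) to $v_{s-1}$ (level $\geq 4$) forces the prefix to visit every one of $L_1,L_2,L_3,L_4$, so $L_4$ is visited before $A$. Observation \ref{obs:layer} applied to the suffix from $v_{t+1}$ (level $4$) to $\beta$ (level $k+1$) forces the suffix to visit every one of $L_4,\ldots,L_{k+1}$, so $L_4$ is visited after $A$.

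Finally I would run the symmetric argument for $B$: the non-$ba$-type boundary-crossing $b$-type edges are from $L'_{k-2}$ into $B_i$ for $i=3,\ldots,k-2$ (entering) and from $B_i$ out to $L_i \cup L'_i$ for $i=3,\ldots,k-2$ (leaving), while the $B_i \to B_3$ edges stay inside $B$. Thus the predecessor of the first $B$-vertex on $P$ lies in $L'_{k-2}$ (level $k-2$) and the successor of the last $B$-vertex lies in some level $\leq k-2$; Observation \ref{obs:layer} then gives that $L_{k-2}$ is visited both before and after $B$. The only real obstacle is routine bookkeeping — making sure not to miss an edge type and remembering that $A_{k-1}\to A_i$ and $B_i \to B_3$ edges are internal to $A$ and $B$, while $ba$-type edges have already been excluded earlier in the proof.
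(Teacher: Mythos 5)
Your proposal is correct and follows essentially the same route as the paper: the paper justifies this observation in one line by noting that (ignoring $ba$-type edges, which were already excluded) all edges into $A$ come from a level at least $4$, all edges out of $A$ go to $L'_4$, all edges into $B$ come from $L'_{k-2}$, all edges out of $B$ go to a level at most $k-2$, and then applying Observation~\ref{obs:layer}. Your write-up simply spells out this edge enumeration and the application of Observation~\ref{obs:layer} to the prefix before the first $A$ (resp.\ $B$) vertex and the suffix after the last one, which is exactly the intended argument.
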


\begin{lemma}\label{lem:first}
Fix $i=2,\dots,k$ and let $\gamma_1$ and $\gamma_2$ be the first and last vertices in $L_i$ that $P$ visits, respectively. Let $P_1$ be the subpath of $P$ from $\alpha$ to $\gamma_1$ and let $P_2$ be the subpath of $P$ from $\gamma_2$ to $\beta$. If $P_1$ does not visit $B$, then $\gamma_1$ has the prefix $(a_1,\dots,a_{k-i})$, and if $P_2$ does not visit $A$, then $\gamma_2$ has the suffix $(a_{k-i+3},\dots,a_k,x)$ for some coordinate array $x$.
\end{lemma}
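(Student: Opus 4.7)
The plan is to prove the prefix claim by an invariant that tracks the first $k-i$ entries of the ``$a$-prefix'' along $P_1$, and to derive the suffix claim from the same argument on $P_2$ via the construction's $a\leftrightarrow b$, $A\leftrightarrow B$, $L_j\leftrightarrow L_{k+2-j}$ symmetry.

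First I would show that every vertex of $P_1$ strictly before $\gamma_1$ lies in some level $j \le i-1$, and that the edge into $\gamma_1$ is a swap edge from $L_{i-1}$. Observation~\ref{obs:forward} tells me the only way to raise the level is a swap edge $L_j \to L_{j+1}$, so reaching $L_i$ from $L_1$ forces $P_1$ to touch every intermediate level. The other conceivable ways to first enter level $i$ must be ruled out: fixed edges through $u,v$ and $ba$-type back edges have already been excluded earlier in the proof; a $b$-type back edge into $B_i$ would contradict $P_1$ avoiding $B$; and an $a$-type back edge $A_j \to L'_4$ (relevant only for $i=4$) would require first being at $A_j$ at some level $j \ge 4$, which itself forces passing through $L_i$ earlier.

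Next I would introduce the ``$a$-prefix'' of a vertex $\delta$ at level $j$ as the tuple of $a$-vectors in its representation (of length $k-j$, or the whole tuple when $\delta \in A_j$), and prove by induction along $P_1$ that the first $k-i$ entries of this prefix equal $(a_1,\dots,a_{k-i})$. The base case at $\alpha = (a_1,\dots,a_{k-1})$ is immediate. For the inductive step I would enumerate the admissible edge types (no fixed, $b$-type, or $ba$-type edges) and check each:
\begin{itemize}
\item coordinate-change edges touch no vector;
\item vector-change edges between $L_j$ and $L'_j$ alter only $a_{k-j}$ or $b_{k-j+3}$, and for $j \le i-1$ the index $k-j \ge k-i+1$ lies beyond the first $k-i$ positions;
\item swap edges between $L_j$ and $L_{j+1}$ share their first $k-j-1$ prefix entries, and for $j \le i-2$ this is $\ge k-i+1$;
\item $a$-type back edges $L_j \cup L'_j \to A_j$ preserve the $a$-prefix verbatim;
\item $a$-type back edges $A_j \to L'_4$ with $j \le i-1$ force the first $k-j \ge k-i+1$ entries of the $L'_4$ vertex to match those of the $A_j$ endpoint, propagating the invariant;
\item back edges within $A$ (from $A_{k-1}$ into $A_j$) can appear only when $i = k$, where $k-i = 0$ renders the claim vacuous.
\end{itemize}
Applying the invariant at the predecessor of $\gamma_1$ in $L_{i-1}$, and then invoking the final swap edge which preserves exactly those first $k-i$ entries, identifies $\gamma_1$'s entire $a$-prefix as $(a_1,\dots,a_{k-i})$.

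The main (mild) obstacle is the case analysis in the first step, which must carefully use that $P_1$ avoids $B$ (on top of the globally excluded $u$, $v$, and $ba$-type back edges) to eliminate every ``back-door'' entry into level $i$ through $A_i$, $L'_i$, or $L'_4$. Once that is pinned down, the invariant argument is a mechanical edge-by-edge verification. The suffix claim for $P_2$ then follows by running the same argument under the construction's symmetry with orientation reversed, now invoking the hypothesis that $P_2$ avoids $A$ in place of $P_1$ avoiding $B$.
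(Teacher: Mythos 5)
Your proposal is correct and follows essentially the same route as the paper's proof: confine $P_1$ (respectively $P_2$) to levels below (above) $i$ via Observations~\ref{obs:forward} and~\ref{obs:layer}, note every such vertex carries an $a$-prefix (suffix) of length at least $k-i$ ($i-2$), and check that none of the edge types available when $B$ (respectively $A$), $u$, $v$ are avoided can alter those entries, finishing by symmetry. Your explicit edge-by-edge invariant and the observation that the edge into $\gamma_1$ is a swap edge from $L_{i-1}$ just spell out what the paper compresses into ``by construction, all edges on $P_1$ do not change any of the vectors.''
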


\begin{proof}
Suppose that $P_1$ does not visit $B$. By Observation~\ref{obs:layer},  all vertices in $P_1$ except for $\gamma_1$ are in levels below $i$.
Then by construction, all vertices on $P_1$ contain a prefix of the form $(a'_1,\dots,a'_{k-i})$ where $a'_i\in S_i$. Since $P_1$ does not visit $B$, $u$, or $v$, all edges on $P_1$ do not change any of the vectors $(a'_1,\dots,a'_{k-i})$. Thus, these vectors are the same for $\gamma_1$ and $\alpha$; that is, $\gamma_1$ has the prefix $(a_1,\dots,a_{k-i})$.

Now suppose $P_2$ does not visit $A$. By Observation~\ref{obs:layer}, all vertices in $P_2$ except for $\gamma_2$ are in levels above $i$. 
Then by construction, all vertices on $P_2$ contain a suffix of the form $(a'_{k-i+3},\dots, a'_k)$ where $a'_i\in S_i$. Since $P_2$ does not visit $A$, $u$, or $v$, all edges on $P_2$ do not change any of the vectors $(a'_{k-i+3},\dots, a'_k)$. Thus, these vectors are the same for $\gamma_2$ and $\beta$; that is, $\gamma_2$ ends with $(a_{k-i+3},\dots, a_k, x)$ for some $x$.
\end{proof}

To analyze the length of $P$, we will condition on which $L_i$ are in a \emph{loop}, which is defined as follows.

\begin{definition}[loop]
For all $i$, we say that $L_i$ \emph{is in a loop} if $P$ visits $L_i$ at least twice. 
\end{definition}

An outline of the remainder of the proof is as follows. First we will define the notion of a set $L_i$ in a loop that \emph{covers} a set $L_j$ not in a loop. Then we will prove that every $L_i$ is either in a loop or covered. Then, we will partition $L$ where each piece of the partition is composed of a set of consecutive layers that are all in loops as well as the layers covered by these layers. Then we will compute the length of the subpath of $P$ in each of these pieces. Finally, we will take the sum over all of these subpaths and the edges between them.

\begin{definition}[cover]
For all $i=3,\dots,k-1$, we say that $L_i$ \emph{covers} $L_{i-1}$ if the following conditions are satisfied: \begin{enumerate}
    \item $L_{i-1}$ is not in a loop,
    \item $L_i$ is in a loop, and
    \item If $\gamma_1= (a'_1,\dots, a'_{k-i},a'_{k-i+3},\dots ,a'_{k},x_{\gamma_1})$ and $\gamma_2=(a''_1,\dots, a''_{k-i},a''_{k-i+3},\dots ,a''_{k},x_{\gamma_2})$ are the first and last vertices from $L_i$ that $P$ visits, then $a'_{k-i+3}\not=a''_{k-i+3}$ and $x_{\gamma_1}\not=x_{\gamma_2}$.
\end{enumerate}
Symmetrically, for all $i=3,\dots,k-1$, we say that $L_i$ \emph{covers} $L_{i+1}$ if the following conditions are satisfied: \begin{enumerate}
    \item $L_{i+1}$ is not in a loop,
    \item $L_i$ is in a loop, and
    \item If $\gamma_1= (a'_1,\dots, a'_{k-i},a'_{k-i+3},\dots ,a'_{k},x_{\gamma_1})$ and $\gamma_2=(a''_1,\dots, a''_{k-i},a''_{k-i+3},\dots ,a''_{k},x_{\gamma_2})$ are the first and last vertices from $L_i$ that $P$ visits, then $a'_{k-i}\not=a''_{k-i}$ and $x_{\gamma_1}\not=x_{\gamma_2}$.
    \end{enumerate}
Additionally, we say that $L_4$ covers both $L_3$ and $L_2$ if $P$ visits $A$. Symmetrically, we say that $L_{k-2}$ covers both $L_{k-1}$ and $L_k$ if $P$ visits $B$.
\end{definition}

\begin{lemma}\label{lem:cover}
For all $i=2,\dots,k$, $L_i$ is either in a loop or covered.
\end{lemma}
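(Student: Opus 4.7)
My plan is to argue by contradiction. Suppose some $L_i$ with $i \in \{2,\ldots,k\}$ is neither in a loop nor covered, and let $\eta_i$ be the unique vertex of $P$ in $L_i$. The in-graph neighbors of $\eta_i$ all lie in levels $i-1$, $i$ (via $L'_i$, $A_i$, $B_i$), and $i+1$, so the two edges of $P$ incident to $\eta_i$ connect only to these nearby levels. Moreover, because $L_i$ is visited exactly once, neither of the subpaths $P_1$ (from $\alpha$ to $\eta_i$) nor $P_2$ (from $\eta_i$ to $\beta$) can re-enter $L_i$; any ``back across level $i$'' movement by $P$ would force a revisit of $L_i$. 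Applying Lemma~\ref{lem:first} then pins down the prefix and suffix of $\eta_i$: if $P_1$ avoids $B$ then $\eta_i$ has prefix $(a_1,\ldots,a_{k-i})$, and if $P_2$ avoids $A$ then $\eta_i$ has suffix $(a_{k-i+3},\ldots,a_k,x)$.

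I would first dispose of the boundary cases. If $P$ visits $A$ then by definition $L_4$ covers both $L_2$ and $L_3$, and symmetrically $L_{k-2}$ covers $L_{k-1}$ and $L_k$ when $P$ visits $B$. So for $i\in\{2,3\}$ the hypothesis that $L_i$ is uncovered forces $P$ to avoid $A$, and for $i\in\{k-1,k\}$ it forces $P$ to avoid $B$. For the interior range $4 \le i \le k-2$ I would argue that if $P_1$ visits $B$ or $P_2$ visits $A$, then the detour through $B$ (respectively $A$) must pass through $L_{k-2}$ (respectively $L_4$) twice because the only way in or out is via $L'_{k-2}$ (respectively $L'_4$), putting $L_{k-2}$ (respectively $L_4$) in a loop; a short calculation using the structure of back edges then either gives the covering of $L_i$ by an adjacent loop, or yields a direct contradiction.

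With $\eta_i$'s vector prefix and suffix thus forced to agree with those of $\alpha$ and $\beta$, I would examine the two edges $P$ uses at $\eta_i$. If both are forward swap edges (incoming from $L_{i-1}$, outgoing to $L_{i+1}$), the argument iterates through the neighbors: the last visit to $L_{i-1}$ and the first visit to $L_{i+1}$ inherit the forced prefix/suffix structure with the same coordinate array $x$, and propagating this across all consecutive non-looped, uncovered layers would eventually force a single coordinate array $x$ to be simultaneously valid at every $L_2,\ldots,L_k$, contradicting orthogonality of $a_1,\ldots,a_k$. In the remaining cases where some edge at $\eta_i$ is a vector-change, coordinate-change, or back edge, I would track that non-swap edge to certify that one of $L_{i-1}$, $L_{i+1}$ is in fact in a loop whose first and last visits differ in precisely the coordinate entry and the vector component ($a'_{k-i+3}$ or $a'_{k-i}$) required by the definition of ``cover'', contradicting the assumption that $L_i$ is uncovered.

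The main obstacle will be the exhaustive case analysis over the combinations of (i) which subpath visits $A$ or $B$, (ii) the edge types at $\eta_i$, and (iii) the special boundary rules at $L_4$ and $L_{k-2}$. The bookkeeping is delicate because the covering definition requires \emph{both} a vector change at a specific index and a coordinate change in the same loop, so ruling out the uncovered case demands verifying that each non-swap edge at $\eta_i$ actually induces both such differences in the first and last visits to the neighboring layer.
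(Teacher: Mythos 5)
Your setup (contradiction, the unique vertex $\eta_i=\gamma$, Lemma~\ref{lem:first}, and disposing of visits to $A$/$B$ via the ``$L_4$ covers $L_2,L_3$'' clause) matches the paper, but the core of the argument has a genuine gap. After fixing the prefix/suffix of $\eta_i$, you propose to ``propagate across all consecutive non-looped, uncovered layers'' until a single coordinate array is valid at every layer. The contradiction hypothesis, however, supplies only \emph{one} layer that is neither in a loop nor covered; $L_{i-1}$ and $L_{i+1}$ may perfectly well be in loops, so there is nothing to propagate along, and your argument never gets past the neighbors of $\eta_i$. The missing idea is to use the failure of the cover relation itself: since $L_{i-1}$ does not cover $L_i$ (and likewise $L_{i+1}$ does not), condition~3 must fail, i.e.\ the first and last $L_{i-1}$-vertices on $P$ agree either in the relevant vector slot or in the coordinate array. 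Combining that disjunction with Lemma~\ref{lem:first} (applied to the \emph{first} visit of $L_{i-1}$ and the \emph{last} visit of $L_{i+1}$) transfers the information $(a_1,\dots,a_{k-i+1})$, resp.\ $(a_{k-i+2},\dots,a_k)$, onto the swap-edge neighbors $\gamma_1,\gamma_2$ of $\eta_i$, and the layer constraints then force $a_j[x_{i-1}]=1$ for all $j=1,\dots,k$, contradicting orthogonality at the single coordinate $x_{i-1}$. Your sketch never performs this transfer, so the claimed contradiction with orthogonality is not actually reached.

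Two secondary issues. First, your ``remaining cases where some edge at $\eta_i$ is a vector-change, coordinate-change, or back edge'' cannot occur at all: such an edge leaves $\eta_i$ at level $i$ (to $L'_i$, $A_i$, or $B_i$), and by Observation~\ref{obs:layer} the continuation to $\beta$ (or the approach from $\alpha$) would have to cross $L_i$ a second time, contradicting that $L_i$ is visited once; the correct conclusion is impossibility, not that a neighboring layer is in a loop that covers $L_i$. Second, for $4\le i\le k-2$ the exclusion of $B$ on $P_1$ and $A$ on $P_2$ is immediate from Observation~\ref{obs:forward}/\ref{obs:layer}: $P_1$ lives strictly below level $i$ except at $\eta_i$, and the only entrance to $B$ is from $L'_{k-2}$ (level $k-2\ge i$); symmetrically the only exit from $A$ is to $L'_4$. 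Your proposed detour-through-a-loop argument for this range is both unnecessary and not aligned with the definition of cover.
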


\begin{proof}
Suppose for contradiction that there exists $i$ with $2\leq i \leq k$ such that $L_i$ is neither in a loop nor covered. Let $\gamma$ be the single vertex in $L_i$ that $P$ visits. Let $P_1$ be the part of $P$ from $\alpha$ to $\gamma$, and let $P_2$ be the part of $P$ from $\gamma$ to $\beta$. By Observation~\ref{obs:forward}, except for $\gamma$, $P_1$ is entirely contained in levels below $i$, and except for $\gamma$, $P_2$ is entirely contained in levels above $i$.

We claim that $P_1$ does not visit $B$. The only edges to $B$ from another set are from $L'_{k-2}$, so $P_1$ can only visit $B$ if $i$ is either $k-1$ or $k$. In this case $L_i$ is covered, by the final part of the definition of cover. Thus $P_1$ does not visit $B$. 

Similarly, $P_2$ does not visit $A$ because the only edges from $A$ to another set are to $L'_4$, so $P_2$ can only visit $A$ if $i$ is either $3$ or $2$, in which case $L_i$ is covered, by the final part of the definition of cover. Since $P_1$ does not visit $B$ and $P_2$ does not visit $A$, Lemma~\ref{lem:first} implies that $\gamma=(a_1,\dots,a_{k-i},a_{k-i+3},\dots,a_k,x)$ for some coordinate array $x$. 

Now consider the two edges incident to $\gamma$ on the path $P$. Suppose that they are $\gamma_1\gamma$ and $\gamma\gamma_2$. Since $L_i$ is not in a loop, Observation~\ref{obs:layer} implies that  $\gamma_1\gamma$ is a swap edge from $L_{i-1}$ to $L_i$.  Hence, if $i\not=2$, then $\gamma_1=(a_1,\dots,a_{k-i},a'_{k-i+1},a_{k-i+4},\dots,a_k,x)$ for some $a'_{k-i+1}$, and if $i=2$, then $\gamma_1=\alpha$. Symmetrically, Observation~\ref{obs:layer} implies that $\gamma\gamma_2$ is a swap edge from $L_i$ to $L_{i+1}$. Hence, if $i\not=k$, then $\gamma_2=(a_1,\dots,a_{k-i-1},a'_{k-i+2},a_{k-i+3}\dots,a_k,x)$ for some $a'_{k-i+2}$, and if $i=k$, then $\gamma_2=\beta$.

By Observation~\ref{obs:layer}, $\gamma_1$ is the last vertex in $L_{i-1}$ that $P$ visits and $\gamma_2$ is the first vertex in $L_{i+1}$ that $P$ visits.
Suppose $i\not=2$ and let $\gamma'_1$ be the first vertex in $L_{i-1}$ that $P$ visits. By condition 3 of the definition of cover, since $L_{i-1}$ does not cover $L_i$, $\gamma'_1$ either contains $a'_{k-i+1}$ or $x$ in its representation. However, since $P_1$ does not visit $B$, Lemma~\ref{lem:first} implies that $\gamma'_1$ has the prefix $(a_1,\dots,a_{k-i+1})$. Thus, either $a_{k-i+1}=a'_{k-i+1}$ or $\gamma'_1$ contains $x$ in its representation. If $a_{k-i+1}=a'_{k-i+1}$, then $\gamma_1$ has the prefix $(a_1,\dots,a_{k-i+1})$, so for all $j=1,\dots,k-i+1$ we have $a_j[x_{i-1}]=1$. If $\gamma'_1$ contains $x$ in its representation, then since $\gamma'_1$ has the prefix $(a_1,\dots,a_{k-i+1})$, we have the same conclusion that for all $j=1,\dots,k-i+1$,  $a_j[x_{i-1}]=1$. 

If $i=2$ then the edge between $\alpha=\gamma_1$ and $\gamma$ implies that for all $j=1,\dots k-1$ we have $a_j[x_1]=1$. So, regardless of $i$, we have that for all $j=1,\dots,k-i+1$, $a_j[x_{i-1}]=1$.

Symmetrically, suppose $i\not=k$ and let $\gamma'_2$ be the last vertex in $L_{i+1}$ that $P$ visits. Since $L_{i+1}$ does not cover $L_i$, $\gamma'_2$ either contains $a'_{k-i+2}$ or $x$ in its representation. However, since $P_2$ does not visit $A$, Lemma~\ref{lem:first} implies that $\gamma'_2$ has the suffix $(a_{k-i+2},\dots,a_k,y)$ for some $y$. Thus, either $a_{k-i+2}=a'_{k-i+2}$ or $\gamma'_2$ contains $x$ in its representation. If $a_{k-i+2}=a'_{k-i+2}$, then $\gamma_2$ has the suffix $(a_{k-i+2},\dots,a_k,x)$, so for all $j=k-i+2,\dots,k$ we have $a_j[x_{i-1}]=1$. If $\gamma'_2$ contains $x$ in its representation, then since $\gamma'_2$ has the suffix $(a_{k-i+2},\dots,a_k,x)$, we have the same conclusion that for all $j=k-i+2,\dots,k$ we have $a_j[x_{i-1}]=1$. 

If $i=k$ then the edge between $\gamma$ and $\gamma_2=\beta$ implies that for all $j=2,\dots k$ we have $a_j[x_{k-1}]=1$. So, regardless of $i$, we have that for all $j=k-i+2,\dots k$, $a_j[x_{i-1}]=1$.

Thus, we have shown that for each $j=1,\dots k$, we have $a_j[x_{i-1}]=1$. This contradicts the fact that $a_1,\dots,a_k$ are orthogonal, completing the proof.
\end{proof}

\begin{lemma}\label{lem:excess}
Let $i$ and $j$ be such that $2\leq i\leq j \leq k$, neither $L_{i-1}$ nor $L_{j+1}$ are in a loop, and for all $\ell=i,\dots,j$, $L_\ell$ is in a loop. Let $c$ be the total number of layers that are covered by $L_i,\dots,L_j$. The subpath $P'$ of $P$ from the first time $P$ visits $L_i$ to the last time $P$ visits $L_j$ is of length at least $2(j-i)+c+1$.
\end{lemma}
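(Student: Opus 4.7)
The plan is to lower-bound the number of vertices of $P'$ by $2(j-i+1)+c$; since a path has length one less than its vertex count, this yields length at least $2(j-i)+c+1$. My argument has two ingredients: a base contribution from the loops, and an additional contribution from the covers.

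For the base of $2(j-i+1)$, I rely on the loop condition: each $L_\ell$ with $\ell\in\{i,\ldots,j\}$ is visited at least twice by $P$. I would use Observation~\ref{obs:layer} to confine these visits to $P'$: any sub-path of $P$ from $\alpha$ (in $L_1$) to a vertex of $L_\ell$ must pass through $L_i$, so the first visit to $L_\ell$ is no earlier than the first visit to $L_i$; symmetrically, any sub-path of $P$ from a vertex of $L_\ell$ to $\beta$ (in $L_{k+1}$) must pass through $L_j$, so the last visit to $L_\ell$ is no later than the last visit to $L_j$. Hence every visit to every $L_\ell$ for $\ell\in\{i,\ldots,j\}$ lies in $P'$, contributing $2(j-i+1)$ distinct vertices in $\bigcup_{\ell=i}^j L_\ell$.

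For the $+c$ from covers, I would show that each covered layer forces at least one further vertex in $P'$ that lies outside $\bigcup_{\ell=i}^j L_\ell$. For a normal cover where $L_\ell$ covers $L_{\ell-1}$ (with $\ell\in\{3,\ldots,k-1\}$), the condition $x_{\gamma_1}\neq x_{\gamma_2}$ demands at least one coordinate-array change between the first and last visits to $L_\ell$. Since swap edges, vector-change edges, and the $L\to A$ and $B\to L$ back edges all preserve the coordinate array, the only ways to modify $x$ are via a coord-change edge (which for $\ell\in\{3,\ldots,k-1\}$ necessarily passes through $L'_\ell$) or via a back-edge excursion that visits $A$ or $B$; each of these introduces a vertex in $L'_\ell$, $A$, $B$, or an out-of-range $L_{\ell'}$. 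The vector condition $a'_{k-\ell+3}\neq a''_{k-\ell+3}$ further constrains this excursion: if $L_\ell$ additionally covers $L_{\ell+1}$, then the distinct vector positions $a_{k-\ell+3}$ and $a_{k-\ell}$ must both change, and since a single $L_\ell$–$L'_\ell$ vector-change edge modifies only one of them, two distinct excursions (hence two extras) are forced. For the special covers where $L_4$ covers both $L_3$ and $L_2$ because $P$ visits $A$ (and symmetrically $L_{k-2}$ with $B$), a visit to $A$ produces a vertex in some $A_\ell$ together with a vertex in $L'_4$ (since every out-edge from $A$ lands in $L'_4$), giving two extras to match $c=2$.

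Finally, I would verify that the extras attributed to covers at distinct layers are themselves distinct: extras associated with covers at different $L_\ell\neq L_{\ell'}$ are trivially distinct because they live in disjoint single-layer supports ($L'_\ell$ vs.\ $L'_{\ell'}$, or in different $A_\ell$, $B_\ell$), while the double-cover case at a single $L_\ell$ was handled above by the two-vector-change argument. Summing, $P'$ contains at least $2(j-i+1)+c$ vertices, so its length is at least $2(j-i)+c+1$. The main obstacle I anticipate is precisely this case analysis in the second step: systematically going through each edge type (swap, vector-change, coordinate-change, $a$-type, $b$-type, and $ba$-type back edges, the last already ruled out for $P$) to verify what quantities they can and cannot modify, and then carefully bookkeeping when several covers coexist so that the extras forced by different covers are not inadvertently identified.
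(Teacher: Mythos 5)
Your base count is sound and matches the paper's starting point: the paper also restricts to $P'$, notes it is confined to levels $i,\dots,j$ with no other part of $P$ entering those levels, and takes a set $Z$ of two vertices per looped layer, so that everything reduces to showing that $c$ covered layers force $c$ additional vertices beyond $Z$. The gap is in how you force and separate those extras. Your charging scheme localizes the witness of a cover at $L_\ell$ to level $\ell$ (``a coord-change edge \ldots necessarily passes through $L'_\ell$'', and distinctness ``because they live in disjoint single-layer supports''), and this localization is false. The coordinate-array inequality for the first/last visits to $L_\ell$ can be realized anywhere on the subpath between them: at $L'_r$ for some other $r$ in range, by an excursion through $A$ (whose exit lands in $L'_4$) or through $B$ (whose entrance is $L'_{k-2}$), or even by a coordinate-change edge \emph{inside} $L_2$ or $L_k$, whose endpoints both lie in $L$ and hence produce no vertex outside $\bigcup_{\ell=i}^j L_\ell$ at all. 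Likewise the vector inequalities: a single $b$-type excursion $L'_{k-2}\to B_r\to L_r\cup L'_r$ rewrites the entire vector prefix \emph{and} the coordinate array in one step, so one pair of extra vertices could a priori serve as the witness for several covers at once; nothing in your argument rules out this double counting, and your treatment of the double cover at one layer (``a single $L_\ell$--$L'_\ell$ vector-change edge modifies only one of them'') again presumes the changes happen at level $\ell$.

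Closing this gap is exactly where the paper's work lies, and it does so by a global case analysis rather than per-cover charging: it proves the contrapositive, conditioning on whether there are $1$, $2$, or $3$ vertices beyond $Z$. With one extra, that vertex must lie in $L'$ (a lone $A$/$B$ vertex is impossible since entering $B$ needs $L'_{k-2}$ and leaving $A$ needs $L'_4$), and its only two incident non-swap edges can realize at most two of the three inequalities needed for a double cover, so at most one layer is covered. With two extras, the only way to exceed two covers is a special cover, which consumes both extras (one in $A$ and one in $L'_4$, or one in $B$ and one in $L'_{k-2}$); the paper then observes that the remaining edges of $P'$ cannot change the vector $a_{k-j}$ (resp.\ $a_{k-i+3}$), so the boundary layer on the other side cannot also cover --- note that in this lemma the only possible normal covers are $L_i$ covering $L_{i-1}$ and $L_j$ covering $L_{j+1}$, a structural fact your write-up never uses. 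With three extras, four covers would force vertices in all of $A$, $B$, $L'_4$, $L'_{k-2}$, which is one too many. To repair your proof you would need an argument of this global kind (tracking which edge types can alter which vector positions and the coordinate array, and showing shared witnesses cannot cover more layers than extras); the ``disjoint supports'' shortcut does not survive scrutiny.
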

\begin{proof}
We will show the contrapositive: for any $c'$, if $P'$ is of length at most $2(j-i)+c'$, then $L_i,\dots,L_j$ cover a total of at most $c'-1$ layers.

First, we note that since neither $L_{i-1}$ nor $L_{j+1}$ are in a loop, Observation~\ref{obs:layer} implies that $P'$ is entirely contained in levels $i,\dots,j$, and that no vertex on $P\setminus P'$ is in a level $i,\dots,j$. Then, since $L_\ell$ is in a loop for all $\ell=i,\dots,j$, $P'$ must contain at least two vertices from each such $L_\ell$. Let $Z$ be a subset of the vertices of $P'$ formed by taking exactly two arbitrary vertices on $P'$ from each such layer $L_\ell$. The size of $Z$ is exactly $2(j-i)+2$.

By the definition of cover, any layer can only cover the layers at most two above and below itself, so the layers $L_i,\dots,L_j$ can only cover a total of at most 4 layers. Thus, the lemma is trivially true for $c'\geq 5$. The lemma is also trivially true for $c'\leq 1$. We will condition on the length of $P'$; that is, whether $c'=2$, $3$, or $4$.

\paragraph{Case 1: $c'=2$.} $P'$ contains $2(j-i)+2$ edges and $2(j-i)+3$ vertices, so $P'$ contains exactly one vertex $\gamma$ in addition to $Z$. Since $Z\subseteq L$ and all paths from $L$ to $B$ as well as all paths from $A$ to $L$ go through $L'$, $\gamma\not\in A\cup B$. Thus, the only way that a layer in $L_i,\dots,L_j$ can cover another layer is if there exist two vertices on $P$ with different coordinate arrays (by condition 3 in the definition of cover). Every edge with both endpoints in $L$ is a swap edge, which by definition connects two vertices with the same coordinate array. Thus, $\gamma\not\in L$. We have shown that $\gamma\not\in A\cup B\cup L$, so $\gamma\in L'$. 

Let $\ell$ be such that $\gamma\in L'_\ell$. Let $\gamma_1$ and $\gamma_2$ be the vertices right before and right after $\gamma$ on $P$, respectively. Since $L_\ell$ contains the only vertices in $L$ that are adjacent to some vertex in $L'_\ell$, $\gamma_1$ and $\gamma_2$ are both in $L_\ell$. Let $\gamma_1=(a'_1,\dots, a'_{k-i},a'_{k-i+3},\dots ,a'_{k},x_{\gamma_1})$ and let $\gamma_2=(a''_1,\dots, a''_{k-i},a''_{k-i+3},\dots ,a''_{k},x_{\gamma_2})$. Since $P'$ contains no other vertices in $L_\ell$ besides $\gamma_1$ and $\gamma_2$, and $P\setminus P'$ contains no vertices in $L_\ell$, $\gamma_1$ and $\gamma_2$ are the first and last vertices from $L_\ell$ that $P$ visits. Thus, by definition, if $L_{\ell}$ covers $L_{\ell-1}$ then $a'_{k-i+3}\not=a''_{k-i+3}$ and $x_{\gamma_1}\not=x_{\gamma_2}$. Similarly, if $L_{\ell}$ covers $L_{\ell+1}$ then $a'_{k-i}\not=a''_{k-i}$ and $x_{\gamma_1}\not=x_{\gamma_2}$. Thus, if $L_\ell$ covers both $L_{\ell-1}$ and $L_{\ell+1}$, then $a'_{k-i+3}\not=a''_{k-i+3}$, $a'_{k-i}\not=a''_{k-i}$, and $x_{\gamma_1}\not=x_{\gamma_2}$. However, there are only two edges on $P'$ between $\gamma_1$ and $\gamma_2$ and each of them is either a vector-change edge or a coordinate-change edge. This means that only two out of the three above non-equalities can be true. Thus, $L_\ell$ can only cover at most one level. Since the edges on $P'$ that are not incident to $\gamma$ are swap edges and do not change the coordinate array, no other layer in $L_i,\dots,L_j$ except for $L_\ell$ covers any layer. Thus, we have shown that $L_i,\dots,L_j$ cover a total of at most one layer, as desired. 

\paragraph{Case 2: $c'=3$.} $P'$ contains $2(j-i)+3$ edges and $2(j-i)+4$ vertices, so $P'$ contains exactly two vertices in addition to $Z$. We would like to show that $L_i,\dots,L_j$ cover a total of at most two layers. The only way for $L_i,\dots,L_j$ to cover more than two layers is to use the last part of the definition of cover, which allows $L_4$ or $L_{k-2}$ to cover two layers. Thus, we will show that if $i=4$ and $L_4$ covers $L_3$ and $L_2$, or if $j=k-2$ and $L_{k-2}$ covers $L_{k-1}$ and $L_k$, then no other layers can be covered by $L_i,\dots,L_j$.

Suppose $i=4$ and $L_4$ covers $L_3$ and $L_2$. By the definition of cover, $P$ visits $A$. Then, since all vertices in $P\cap L_4$ are on $P'$, Observation~\ref{obs:ab} implies that $P'$ visits $A$. Therefore, the two vertices on $P'$ in addition to $Z$ are one vertex in $A$ and one vertex in $L'_4$. Thus, the only edges that $P'$ uses are swap edges with endpoints in $L_4\cup\dots\cup L_j$, an $a$-type back edge from $L_4\cup\dots\cup L_j$ to $A_4\cup\dots\cup A_j$, an $a$-type back edge from $A_4\cup\dots\cup A_j$ to $L'_4$, and either a vector-change or coordinate-change edge from $L'_4$ to $L_4$. It will be important to note that by construction, all of these edges go from a vertex that contains a vector $a'_{k-j}\in S_{k-j}$ to another vertex containing the same vector $a'_{k-j}\in S_{k-j}$. This is because the only edges that change a vector $S_{k-j}$ to a different vector in $S_{k-j}$ are vector-change edges between $L_j$ and $L'_j$ or between $L_{j+3}$ and $L'_{j+3}$, which $P'$ does not use.

 Our goal is to show that $L_j$ does not cover $L_{j+1}$.  Let $\gamma_1= (a''_1,\dots, a''_{k-j},a''_{k-j+3},\dots ,a''_{k},x_{\gamma_1})$ and $\gamma_2=(a'''_1,\dots, a'''_{k-j},a'''_{k-j+3},\dots ,a'''_{k},x_{\gamma_2})$ be the two vertices in $L_j$ that $P'$ visits. Since $P\setminus P'$ contains no vertices in $L_j$, $\gamma_1$ and $\gamma_2$ are the first and last vertices from $L_j$ that $P$ visits. Thus, if $L_j$ covers $L_{j+1}$, then $a''_{k-j}\not=a'''_{k-j}$. But we have already shown that this cannot be the case since every edge on $P'$ does not change the vector $a'_{k-j}\in S_{k-j}$.
 
 The $j=k-2$ case is completely symmetric to the $i=4$ case, but we include it for completeness. Suppose $j=k-2$ and $L_{k-2}$ covers $L_{k-1}$ and $L_k$. This means that $P$ visits $B$. Then, since all vertices in $P\cap L_{k-2}$ are on $P'$, Observation~\ref{obs:ab} implies that $P'$ visits $B$. Therefore, the two vertices on $P'$ in addition to $Z$ are one vertex in $L'_{k-2}$ and one vertex in $B$. Thus, the only edges that $P'$ uses are swap edges with endpoints in $L_i\cup\dots\cup L_{k-2}$, either a vector-change or coordinate-change edge from $L_{k-2}$ to $L'_{k-2}$, a $b$-type back edge from $L'_{k-2}$ to $B_i\cup\dots\cup B_{k-2}$, and a $b$-type back edge from $B_i\cup\dots\cup B_{k-2}$ to $L_i\cup\dots\cup L_{k-2}$. It will be important to note that by construction, all of these edges go from a vertex that contains a vector $a'_{k-i+3}\in S_{k-i+3}$ to another vertex containing the same vector $a'_{k-i+3}\in S_{k-i+3}$.

 Our goal is to show that $L_i$ does not cover $L_{i-1}$.  Let $\gamma_1= (a''_1,\dots, a''_{k-i},a''_{k-i+3},\dots ,a''_{k},x_{\gamma_1})$ and $\gamma_2=(a'''_1,\dots, a'''_{k-i},a'''_{k-i+3},\dots ,a'''_{k},x_{\gamma_2})$ be the two vertices in $L_i$ that $P'$ visits.  Since $P\setminus P'$ contains no vertices in $L_i$, $\gamma_1$ and $\gamma_2$ are the first and last vertices from $L_i$ that $P$ visits. Thus, if $L_i$ covers $L_{i-1}$, then $a''_{k-i+3}\not=a'''_{k-i+3}$. But we have already shown that this cannot be the case since every edge on $P'$ does not change the vector $a'_{k-i+3}\in S_{k-i+3}$.
 
 \paragraph{Case 3: $c'=4$.} $P'$ contains $2(j-i)+4$ edges and $2(j-i)+5$ vertices, so $P'$ contains exactly three vertices in addition to $Z$. We would like to show that $L_i,\dots,L_j$ cover a total of at most three layers. Suppose for contradiction that $L_i,\dots,L_j$ cover four layers. The only way for this to happen is if $L_4=L_i$ covers $L_3$ and $L_2$, and $L_{k-2}=L_j$ covers $L_{k-1}$ and $L_k$. This means that $P$ visits both $A$ and $B$. Then, since all vertices in $P\cap L_{k-2}$ and all vertices in $P\cap L_4$ are on $P'$, Observation~\ref{obs:ab} implies that $P'$ visits both $A$ and $B$. Since the only edges to $B$ from another set are from $L'_{k-2}$ and the only edges from $A$ to another set are to $L'_4$, $P'$ must contain a vertex in each of $A$, $B$, $L'_{k-2}$, and $L'_4$. However, $P'$ only contains three vertices in addition to $Z$, a contradiction.
 \end{proof}
 
 We will now take the sum over all of the subpaths defined by Lemma~\ref{lem:excess}. Form a partition $\mathcal{P}$ of $2,\dots, k$ where each piece of $\mathcal{P}$ contains a maximal interval $i,\dots, j$ such that $L_i,\dots, L_j$ are all in loops, as well as the layers that $L_i,\dots, L_j$ cover. (The maximality condition means that either $i=2$ or $L_{i-1}$ is not in a loop, and either $j=k$ or $L_{j+1}$ is not in a loop.) To make this a true partition of $2,\dots,k$, for any $\ell$ that has been placed in two pieces of the partition due to being covered by multiple layers, we remove $\ell$ from an arbitrary one of these two pieces. Now, by Lemma~\ref{lem:cover}, every value $2,\dots, k$ is in exactly one piece of $\mathcal{P}$. 
 
Consider a piece $i,\dots,j$ of $\mathcal{P}$, and let $i'\geq i$ and $j'\leq j$ be such that $L_{i'},\dots,L_{j'}$ are each in a loop and the remaining layers in $L_i,\dots,L_j$ are covered. Let $c$ be the number of covered layers in $L_i,\dots,L_j$. That is, $j'-i'+c=j-i$.  By Observation~\ref{obs:layer} and the maximality condition of $\mathcal{P}$, any path with both endpoints in $L'_i,\dots,L'_j$ is entirely contained within levels $i',\dots, j'$. 

Let $P_{i,j}$ and $P_{i',j'}$ be the subpaths of $P$ in the graph induced by levels $i,\dots, j$ and levels $i',\dots,j'$, respectively. By Lemma~\ref{lem:excess}, $P_{i',j'}$ is of length at least $2(j'-i')+c+1$. 
Adding an edge to each of the $c$ covered levels, $P_{i,j}$ is of length at least $2(j'-i'+c)+1=2(j-i)+1$. 

Let $p$ be the number of pieces of $\mathcal{P}$. We first calculate the sum over the lengths of all $P_{i,j}$. Since $\mathcal{P}$ partitions $2,\dots,k$, we have $\sum_{(i,\dots,j)\in\mathcal{P}} 2(j-i)+1=2(k-1-p)+p=2k-p-2$. In addition to the edges in each $P_{i,j}$, $P$ also contains at least $p-1$ edges such that each endpoint is in a different piece of $\mathcal{P}$, as well as an edge from $L_1$ to $L_2$ and an edge from $L_k$ to $L_{k+1}$. Thus, the length of $P$ is at least $2k-p-2+(p-1)+2=2k-1$.

\section{The $k=4$ case}
\label{sec:k=4}
We mainly use the construction of \cite{4ov} with a few alterations, to get a reduction from $4$-OV to directed \emph{unweighted} diameter. Note that the lower bound of \cite{4ov} is for directed weighted diameter. 

Given a $4$-OV instance $S$ where each vector in $S$ is of length $d$, that is, there are $d$ coordinates, we create a graph $G$ such that if $S$ is a NO case then the diameter of $G$ is at most $4$ and if $S$ is a YES case the diameter of $G$ is at least $7$. See Figure \ref{fig:k4}.

We make $4$ copies of $S$ and call them $S_1,\ldots,S_4$. The vertex set of our graph is essentially the same as \cite{4ov}, and we redefine it for completeness. The graph $G$ consists of layers $L_i$ for $i=1,\ldots,5$. Vertices of $L_1$ are $3$ tuples of the form $(a_1,a_2,a_3)$, where $a_i\in S_i$. Vertices of $L_5$ are $3$ tuples of the form $(b_2,b_3,b_4)$, where $b_i\in S_i$. Vertices of $L_2$ are of the form $(a_1,a_2,x)$, vertices of $L_3$ are of the form $(a_1,b_4,x)$ and vertices of $L_4$ are of the form $(b_3,b_4,x)$, where $a_i,b_i\in S_i$ and $x$ is a coordinate array of length $3$ satisfying the conditions of Table \ref{tab:Li}. We have an additional layer $L'_3$ with vertices of the form $(a_1,b_4,x)$ for every coordinate array of length 3, 
where at least $5$ out of the $6$ following conditions hold: $a_1[x_{\ell}]=1$ for each $\ell$ and $b_4[x_{\ell}]=1$ for each $\ell$.\footnote{In fact, this last constraint is not necessary and we can instead define $L'_3$ to be all vertices of the form $(a_1,b_4,x)$. We include this last constraint to be consistent with~\cite{4ov}, so that we can use the correctness of their construction as a black box to argue the correctness of our construction.} 
Finally, we have two vertices $v$ and $u$.

\begin{figure}[h]
    \centering
    \includegraphics[width=0.6\linewidth]{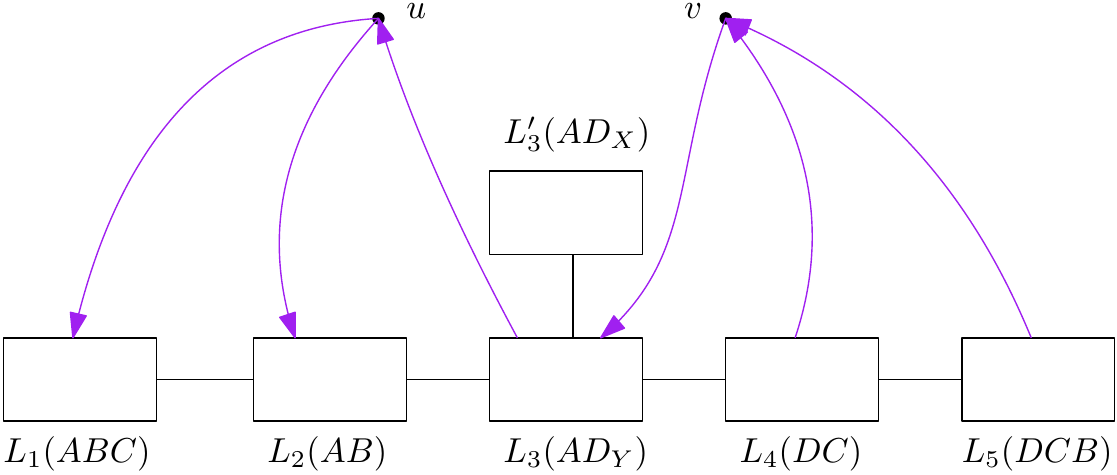}
    \caption{$k=4$ construction. The names in parentheses are from the construction of \cite{4ov} and are put here for ease of comparison.  The purple edges are the fixed edges.}
    \label{fig:k4}
\end{figure}

We have swap edges between $L_i$ and $L_{i+1}$ for $i=1,\ldots,4$. We have vector-change edges between $L_3$ and $L'_3$, between $(a_1,b_4,x)\in L_3$ and $(a'_1,b'_4,x)\in L'_3$ where either $a_1=a'_1$ or $b_4=b'_4$. We have coordinate-change edges between $L_3$ and
$L'_3$, and within $L'_3, L_2$ and $L_4$.
Finally we have fixed edges connected to $u$ and $v$ as follows. Every node in $L_4\cup L_5$ has a directed edge to $u$, and every node in $L_3$ has a directed edge from $u$. Similarly, every node in $L_1\cup L_2$ has an edge from $v$, and every node in $L_3$ has an edge to $v$.

Note that the only difference between our construction and the construction of \cite{4ov} is the coordinate-change edges inside $L'_3$, and the fixed edges.
\subsection{NO instance of $4$-OV implies diameter at most 4}

\subsubsection{Using fixed edges}
\paragraph{Case 1: $v$ to $\beta\in V\setminus\{v\}$.} If $\beta\in L_1\cup L_2$, there is a direct edge from $v$ to $\beta$. Now note that for any $i$, any node in $L_i$ has a swap edge to a node in $L_{i+1}$ and $L_{i-1}$ changing one of its vectors to all 1s vector. So there is an undirected path of length at most $3$ between  $\beta\in L_3\cup L_4\cup L_5$ and some node $\beta'$ in $L_2$.
There is a direct edge from $v$ to $\beta'$, so $d(v,\beta)\le 4$. If $\beta\in L'_3$, there is path of length $2$ from a node $\beta'\in L_2$, using a swap edge and a coordinate-change edge (without changing the coordinate), and hence $d(v,\beta)\le 3$. Finally if $\beta=u$, then from $v$ we take a fixed edge to some node in $L_2$, then we take two swap edges to some node in $L_4$, and then we take a fixed edge to $u$. 
\paragraph{Case 2: $\alpha \in V\setminus\{v\}$ to $v$.} We need to show that there is a path of length at most $3$ from $\alpha$ to some node $\alpha'\in L_3$. Then since $d(\alpha',v)=1$, we have a path of length $4$ from $\alpha$ to $v$. Again note that for any $i$, any node in $L_i$ has a swap edge to a node in $L_{i+1}$ and $L_{i-1}$ changing one of its vectors to all 1s vector. So for $\alpha\in L_1,L_2,L_4,L_5$, there is a path of length $2$ from $\alpha$ to some node in $L_3$. If $\alpha=(a_1,b_4,x)\in L'_3$, then we can take a coordinate change to $(a_1,b_4,x')\in L_3$ where $x'[i]=C(a_1,b_4)$ for $i=1,2,3$. Finally if $\alpha=u$, there is a direct edge from $u$ to all nodes in $L_3$. 
\paragraph{Case 3: $u$ to $\beta\in V\setminus\{u\}$.} Symmetric to case 2.
\paragraph{Case 4: $\alpha \in V\setminus\{u\}$ to $u$.} Symmetric to case 1.
\paragraph{Case 5: $\alpha\in V$ to $\beta\in L_1\cup L_2$.} From $\alpha$ take at most two edges to some node in $L_3$. The we take a fixed edge to $v$, and finally we take another fixed edge to $\beta$. 
\paragraph{Case 6: $\alpha\in L_4\cup L_5$ to $\beta \in V$.} Symmetric to case 5.

\subsubsection{Using variable edges}

In a NO instance, for every set $F$ of at most $4$ vectors, there exists a coordinate that is 1 for every vector in $F$. Given a set $F$ of at most $4$ vectors, recall that $C(F)$ denotes a coordinate that is 1 for every vector in $F$. 

The only remaining cases that are not covered by fixed edges are the following. 
\paragraph{Case 1: $\alpha\in L_1\cup L_2$ to $\beta\in L_4\cup L_5$.} Let $\alpha=(a_1,a_2,a_3)$ if $\alpha\in L_1$ and let $\alpha=(a_1,a_2,x_{\alpha})$ if $\alpha\in L_2$, in which case $a_3$ is the all 1s vector. Let $\beta=(b_2,b_3,b_4)$ if $\beta\in L_5$ and let $\beta=(b_3,b_4,x_{\beta})$ if $\beta\in L_4$, in which case $b_2$ is defined to be the all 1s vector. We will use the coordinate array $x=(x_1,x_2,x_3)$ where for each $i=1,2,3$, $x_i=C(a_1,\ldots,a_{4-i},b_{5-i},\ldots,b_4)$. From $\alpha$, take a swap edge or a coordinate-change edge to $(a_1,a_2,x)\in L_2$, a swap edge to $(a_1,b_4,x)\in L_3$, a swap edge to $(b_3,b_4,x)\in L_4$, and then a swap edge or a coordinate-change edge to $\beta$.
\paragraph{Case 2: $\alpha\in L_1\cup L_2$ to $\beta\in L_3\cup L'_3$.} Let $\alpha=(a_1,a_2,a_3)$ if $\alpha\in L_1$ and let $\alpha=(a_1,a_2,x_{\alpha})$ if $\alpha\in L_2$, in which case $a_3$ is the all 1s vector. Let $\beta=(a'_1,b_4,x_{\beta})$. Consider the following coordinate array $x$: $x_1=C(a_1,a_2,a_3,b_4)$, and $x_i=C(a_1,a_2,a'_1,b_4)$ for $i=2,3$. We take the following path: From $\alpha$, take a coordinate-change edge or a swap edge to $(a_1,a_2,x)\in L_2$. Then take a swap edge to $(a_1,b_4,x)\in L_3$, a vector-change edge to $(a'_1,b_4,x)\in L'_3$, and finally a coordinate-change edge to $\beta$. 
\paragraph{Case 3: $\alpha\in L_3\cup L'_3$ to $\beta\in L_4\cup L_5$.} Symmetric to Case 3.
\paragraph{Case 4: $\alpha\in L_3\cup L'_3$ to $\beta\in L_3\cup L'_3$.} Let $\alpha = (a_1,b_4,x_{\alpha})$ and $\beta=(a'_1,b'_4,x_{\beta})$. Consider the coordinate array $x$ where $x_i=C(a_1,a'_1,b_4,b'_4)$ for $i=1,2,3$. We use the following path: From $\alpha$ take a coordinate-change edge to $(a_1,b_4,x)\in L'_3$. Then go to $(a'_1,b_4,x)\in L_4$ and then to $(a'_1,b'_4,x)\in L'_4$ using two vector-change edges, and finally take a coordinate-change edge to $\beta$.

\subsection{YES instance of $4$-OV implies diameter at least 7}
Suppose that $a_1,\ldots,a_4$ are orthogonal. We will show that the distance from $\alpha=(a_1,a_2,a_3)$ to $\beta=(a_2,a_3,a_4)$ is at least $7$ if it uses one of the following edges: a coordinate-change edge in $L'_3$ or a fixed edge. This is because the rest of the construction is included in the construction of \cite{4ov}, and if the path does not use any of these edges, it is included in the construction of \cite{4ov} and thus it is of length at least $7$.

First suppose that the path uses a coordinate-change edge inside $L'_3$. Then the path has at least two nodes in $L'_3$ and $L_3$, and at least one node in each $L_i$ for $i=1,2,4,5$. So the path is of length at least $7$. 

Next suppose that the path uses a fixed edge. So the path passes through $u$ or $v$. This means that the path has a subpath passing through $L_j$, then $v$ or $u$, then $L_i$, where $j>i$. So $L_i$ and $L_j$ have at least two nodes in the path, the path has at least one node in each $L_r$ for $r=1,\ldots,5$, $r\neq i,j$ and it has $u$ or $v$. So it has at least $8$ nodes and hence it is of length at least $7$.

\hspace{5mm}

\paragraph{Acknowledgments} We would like to thank Virginia Vassilevska Williams for discussions and encouragement.

\bibliographystyle{alpha}
\bibliography{bib} 
\end{document}